\newtheorem{theorem}{Theorem}[section]
\newtheorem{lemma}[theorem]{Lemma}
\newtheorem{sublemma}{Sublemma}
\newtheorem{corollary}[theorem]{Corollary}
\newtheorem{remark}[theorem]{Remark}
\newtheorem*{claim-proof}{Proof of the claim}
\renewenvironment{proof}{\noindent
  \textbf{Proof.}}{\hfill$\Box$\\}
\newcommand{\noop}[1]{}
\newcommand{\cut}[1]{}
\newcommand{\instr}[5]{\ensuremath{\hbox to 60 pt
    {${#1}$\hfil${#2}$\hfil$ \rightarrow
      $\hfil${#3}$\hfil${#4}$\hfil${#5}$}}}
\newcommand{\vp}{\ensuremath{\varphi}}
\newcommand{\nat}{\ensuremath{\mathbb{N}}}
\newcommand{\con}{\wedge}
\newcommand{\dis}{\vee}
\newcommand{\imp}{\rightarrow}
\newcommand{\truth}{\ensuremath{\top}}
\newcommand{\mmodel}[1]{\ensuremath{\frak{#1}}}
\newcommand{\sat}[3]{\ensuremath{\frak{#1}, #2 \models #3}}
\newcommand{\notsat}[3]{\ensuremath{\mmodel{#1}, #2 \not\models #3}}
\newcommand{\psmodel}[1]{\ensuremath{\mathcal{M}^*}}
\newcommand{\pseudomodel}[1]{\ensuremath{\mathcal{M}^{**}}}
\begin{document}

\sloppy

\title{Undecidability \\ of first-order modal and intuitionistic
  logics with \\
  two variables and one monadic predicate letter\thanks{Corrected
    version of the paper published in \textit{Studia Logica}, volume
    107, pages 695--717 (2019). doi:10.1007/s11225-018-9815-7.}}
\author{Mikhail Rybakov\thanks{\texttt{m\_rybakov@mail.ru}} and Dmitry
  Shkatov\thanks{\texttt{shkatov@gmail.ru}}}

\date{\today}
\maketitle

\begin{abstract}
  We prove that the positive fragment of first-order intuitionistic
  logic in the language with two individual variables and a single
  monadic predicate letter, without functional symbols, constants, and
  equality, is undecidable.  This holds true regardless of whether we
  consider semantics with expanding or constant domains.  We then
  generalise this result to intervals $[{\bf QBL}, {\bf QKC}]$ and
  $[{\bf QBL}, {\bf QFL}]$, where {\bf QKC} is the logic of the weak
  law of the excluded middle and {\bf QBL} and {\bf QFL} are
  first-order counterparts of Visser's basic and formal logics,
  respectively.  We also show that, for most ``natural'' first-order
  modal logics, the two-variable fragment with a single monadic
  predicate letter, without functional symbols, constants, and
  equality, is undecidable, regardless of whether we consider
  semantics with expanding or constant domains.  These include all
  sublogics of {\bf QKTB}, {\bf QGL}, and {\bf QGrz}---among them,
  {\bf QK}, {\bf QT}, {\bf QKB}, {\bf QD}, {\bf QK4}, and {\bf QS4}.
\end{abstract}

\section{Introduction}
\label{sec:introduction}

While the (first-order) quantified classical logic {\bf QCl} is
undecidable \cite{Church36}, it contains a number of rather expressive
decidable fragments~\cite{BGG97}.  This has long stimulated interest
in drawing the borderline between decidable and undecidable fragments
of {\bf QCl} using a variety of criteria, in isolation or in
combination, imposed on the language.  One such criterion is the
number and arity of predicate letters allowed in the language: while
the monadic fragment is decidable~\cite{Behmann22}, the fragment
containing a single binary letter is not, as follows
from~\cite{Godel33}.  Another criterion is the number of individual
variables allowed in the language: while the two-variable fragment is
decidable \cite{Mortimer75, GKV97}, the three-variable fragment is not
\cite{Suranyi43}.

Similar questions have long been of interest in (first-order)
quantified intuitionistic and modal logics. For languages without
restrictions on the number of individual variables,
Kripke~\cite{Kripke62} has shown that all ``natural'' quantified modal
logics with two monadic predicate letters are undecidable, while
Maslov, Mints, and Orevkov~\cite{MMO65} and, independently,
Gabbay~\cite{Gabbay81} have shown that quantified intuitionistic logic
with a single monadic predicate letter is undecidable.

The question of where the borderline lies in the intuitionistic and
modal case when it comes to the number of individual variables allowed
in the language has recently been investigated by Kontchakov, Kurucz,
and Zakharyschev in~\cite{KKZ05}.  It is shown in~\cite{KKZ05} that
two-variable fragments of quantified intuitionistic and all
``natural'' modal logics are undecidable.  Moreover, it is established
in~\cite{KKZ05} that, to obtain undecidability of two-variable
fragments, in the intuitionistic case, it suffices to use two binary
and infinitely many monadic predicate letters, while in the modal
case, it suffices to use only (infinitely many) monadic predicate
letters.

Two questions were raised in~\cite{KKZ05} concerning the languages
combining restrictions on the number of individual variables and
predicate letters: first, how many monadic predicate letters are
needed to obtain undecidability of the two-variable fragments in the
modal case, and second, whether it suffices to use monadic predicate
letters to obtain undecidability of the two-variable fragment in the
intuitionistic case.

In the present paper, we address both of the aforementioned questions.
First, we show that for two-variable fragments of most modal logics
considered in~\cite{KKZ05}, it suffices to use a single monadic
predicate letter to obtain undecidability.  Second, we show that the
positive fragment of quantified intuitionistic logic {\bf QInt} is
undecidable in the language with two variables and a single monadic
predicate letter.  We also show that the latter result holds true for
all logics in intervals $[{\bf QBL}, {\bf QKC}]$ and
$[{\bf QBL}, {\bf QFL}]$, where {\bf QKC} is the logic of the weak law
of the excluded middle and {\bf QBL} and {\bf QFL} are first-order
counterparts of Visser's basic and formal logics, respectively.

The paper is structured as follows. In section~\ref{sec:modal-logics},
we prove undecidability results about modal logics.  In
section~\ref{sec:int}, we do likewise for the intuitionistic and
related logics.  We conclude, in section~\ref{sec:conclusion}, by
discussing how our results can be applied in settings not considered
in this paper and pointing out some open questions following from our
work.

\section{Modal logics}
\label{sec:modal-logics}

In this section, we prove undecidability results about two-variable
fragments of quantified modal logics with a single monadic predicate
letter.  This is essentially achieved by adapting to the first-order
language of Halpern's technique~\cite{Halpern95} for establishing
complexity results for single-variable fragments of propositional
modal logics.

\subsection{Syntax and semantics}
\label{sec:syntax-semantics}

A (first-order) quantified modal language contains countably many
individual variables; countably many predicate letters of every arity;
Boolean connectives $\wedge$ and $\neg$; modal connective $\Box$; and
a quantifier $\forall$.  Formulas as well as the symbols $\vee$,
$\imp$, $\exists$, and $\Diamond$ are defined in the usual way. We
also use the following abbreviations:
$\Box^{+} \vp = \vp \con \Box \vp$ and
$\Diamond^{+} \vp = \vp \dis \Diamond \vp$.

A Kripke frame is a tuple $\frak{F} = \langle W,R\rangle$, where $W$
is a non-empty set (of worlds) and $R$ is a binary (accessibility)
relation on $W$.  A predicate Kripke frame is a tuple
$\frak{F}_D = \langle W,R, D\rangle$, where $\langle W,R\rangle$ is a
Kripke frame and $D$ is a function from $W$ into a set of non-empty
subsets of some set (the domain of~$\frak{F}_D$), satisfying the
condition that $wRw'$ implies $D(w) \subseteq D(w')$. We call the set
$D(w)$ the domain of $w$. We will also be interested in predicate
frames satisfying the condition that $wRw'$ implies $D(w) = D(w')$; we
refer to such frames as frames with constant domains.

A Kripke model is a tuple $\frak{M} = \langle W,R,D,I\rangle$, where
$\langle W,R, D\rangle$ is a predicate Kripke frame and $I$ is a
function assigning to a world $w\in W$ and an $n$-ary predicate letter
$P$ an $n$-ary relation $I(w,P)$ on $D(w)$.  We refer to $I$ as the
interpretation of predicate letters with respect to worlds in $W$.

An assignment in a model is a function $g$ associating with every
individual variable $x$ an element of the domain of the underlying
frame.

The truth of a formula $\varphi$ in a world $w$ of a model $\frak{M}$
under an assignment $g$ is inductively defined as follows:
\begin{itemize}
\item
$\frak{M},w\models^g P(x_1,\ldots,x_n)$ if $\langle g(x_1),\ldots,g(x_n)\rangle \in I(w,P)$;
\item $\frak{M},w\models^g\varphi_1\wedge\varphi_2$ if
  $\frak{M},w\models^g\varphi_1$ and $\frak{M},w\models^g\varphi_2$;
\item $\frak{M},w\models^g\neg\varphi_1$ if
  $\frak{M},w\not\models^g\varphi_1$;
\item $\frak{M},w\models^g\Box\vp_1$ if $wRw'$ implies
  $\frak{M},w'\models^g\varphi_1$, for every $w'\in W$;
\item $\frak{M},w\models^g\forall x\,\varphi_1$ if
  $\frak{M},w\models^{g'}\varphi_1$, for every assignment $g'$ such
  that $g'$ differs from $g$ in at most the value of $x$ and such that
  $g'(x)\in D(w)$.
\end{itemize}

Note that, given a Kripke model $\frak{M} = \langle W,R,D,I\rangle$
and $w \in W$, the tuple $\frak{M}_w = \langle D_w, I_w \rangle$,
where $D_w = D(w)$ and $I_w(P) = I(w,P)$, is a classical predicate
model.

We say that $\vp$ is true at world $w$ of model $\frak{M}$ and write
$\frak{M},w\models \vp$ if $\frak{M},w\models^g \vp$ holds for every
$g$ assigning to free variables of $\vp$ elements of $D(w)$.  We say
that $\vp$ is true in $\frak{M}$ and write $\frak{M} \models \vp$ if
$\frak{M},w\models \vp$ holds for every world $w$ of $\frak{M}$.  We
say that $\vp$ is true in predicate frame $\frak{F}_D$ and write
$\frak{F}_D \models \vp$ if $\vp$ is true in every model based on
$\frak{F}_D$. We say that $\vp$ is true in frame $\frak{F}$ and write
$\frak{F} \models \vp$ if $\vp$ is true in every predicate frame of
the form $\frak{F}_D$.  Finally, we say that a formula is true in a
class of frames if it is true in every frame from the class.

Let $\frak{M} = \langle W,R,D,I\rangle$ be a model, $w \in W$, and
$a_1, \ldots, a_n \in D(w)$.  Let $\vp(x_1, \ldots, x_n)$ be a formula
whose free variables are among $x_1, \ldots, x_n$.  We write
$\frak{M}, w \models \vp [a_1, \ldots, a_n]$ to mean
$\frak{M}, w \models^g \vp (x_1, \ldots, x_n)$, where
$g(x_1) = a_1, \ldots, g(x_n) = a_n$.

Given a propositional normal modal logic $L$, let ${\bf Q}L$ be
${\bf QCl} \oplus L$ where $\oplus$ is the operation of closure under
(predicate) substitution, modus ponens, generalization, and
necessitation.  Of particular interest to us are the quantified
counterparts {\bf QGL}, {\bf QGrz}, and {\bf QKTB} of propositinal
logics {\bf GL}, {\bf QGrz}, and {\bf KTB}.  We recall that {\bf GL}
is the logic of Kripke frames whose accessibility relation is
irreflexive, transitive, and contains no infinite ascending chains,
while {\bf Grz} is the logic of frames whose accessibility relation is
reflexive, transitive, antisymmetric, and does not contain infinite
ascending chains of pairwise distinct worlds (in other words, the
accessibility relation on the frames for {\bf Grz} is the reflexive
closure of the one on the frames for {\bf GL}).  We also recall that
{\bf QGL} and {\bf QGrz} are
Kripke-incomplete~\cite{Montagna84,Rybakov01}, but are valid on all
the frames for {\bf GL} and {\bf Grz}, respectively.  Thus, for
technical reasons---namely, to avoid being distracted with
Kripke-completeness---we define logics ${\bf QGL}^{sem}$ and
${\bf QGrz}^{sem}$ as the sets of quantified formulas true in all the
frames of {\bf GL} and {\bf Grz}, respectively.  What is important for
us is that ${\bf QGL} \subseteq {\bf QGL}^{sem}$ and
${\bf QGrz} \subseteq {\bf QGrz}^{sem}$.  Lastly, we recall that {\bf
  KTB} is the logic of Kripke frames whose accessibility relation is
reflexive and symmetric and that {\bf QKTB} is complete with respect
to this class of frames.

Given a logic $L$ and a closed formula $\vp$ in the language of $L$,
we say
that $\vp$ is $L$-satisfiable if $\neg \vp \not\in L$.  If $L$ is
complete with respect to a class $\frak{C}$ of frames,
$L$-satisfiability of $\vp$ amounts to $\vp$ being true at a world of
a model based on a frame in $\frak{C}$.

We now turn to addressing the question, raised in~\cite{KKZ05}, of how
many monadic predicate letters are needed in the language of
quantified modal logics to obtain undecidability of their two-variable
fragments.  Using suitable adaptations of a technique originally
proposed in~\cite{Halpern95}, and further refined in~\cite{ChRyb03},
\cite{RSh18a}, and \cite{RSh18b}, for propositinal languages, we show
that all sublogics of {\bf QGL}, {\bf QGrz}, and {\bf QKTB} are
undecidable in the language with a single monadic predicate letter.

\subsection{Sublogics of QGL and QGrz}
\label{sec:undecidability}

In this section, we prove that all sublogics of $\mathbf{QGL}$ and
$\mathbf{QGrz}$ are undecidable in languages with two individual
variables and a single monadic predicate letter.

In the proof, we rely on the undecidability result by Kontchakov,
Kurucz, and Zakharyaschev~\cite{KKZ05} concerning first-order modal
logics with two variables: we use their formulas encoding an
undecidable tiling problem as a basis for our reduction.  Even thought
their formulas are not suitable as are for our purposes, they can be
readily modified, as explained in the proof of Theorem~\ref{thr:Grz}
below, to be used in our argument.

We begin with a description of a satisfiability-preserving
transformation of formulas that we subsequently, in the proof of
Theorem~\ref{thr:Grz}, apply to a slightly modified version of the
formulas obtained by Kontchakov, Kurucz, and
Zakharyaschev~\cite{KKZ05}.

Let $\vp$ be a (closed) formula containing monadic predicate letters
$P_1, \ldots, P_n$. Let $P_{n+1}$ be a monadic predicate letter
distinct from $P_1, \ldots, P_n$ and let $B = \forall x\, P_{n+1}(x)$.
Define an embedding $\cdot'$ as follows:
\begin{center}
  \begin{tabular}{llll}
    ${P_i(x)}'$ & = & $P_i(x),$ $\mbox{~~where~} i \in \{1, \ldots, n \}$; \\
    $(\neg \phi)'$ & = & $\neg \phi'$; & \\
    $(\phi \con \psi)'$ & = & $\phi' \con \psi'$; & \\
    $(\forall x\, \phi)'$ & = & $\forall x\, \phi'$; & \\
    $(\Box \phi)'$ & = & $\Box (B \imp \phi')$. &
\end{tabular}
\end{center}

\begin{lemma}
  \label{lem:vp-B}
  Let $L \in \{{\bf QK}, {\bf QGL}^{sem}, {\bf QGrz}^{sem} \}$.  Then,
  $\vp$ is $L$-satisfiable if, and only if, $B \con \vp'$ is
  $L$-satisfiable.
\end{lemma}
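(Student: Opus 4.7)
The plan is to prove both directions by induction on formula complexity, using $B$ as a ``relevance marker'' for worlds. The intuition is that the translation $\Box \phi \mapsto \Box(B \imp \phi')$ guards each modality so that only $B$-worlds matter; conversely, forcing $B$ to hold everywhere makes the guard vacuous.

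For the forward direction, suppose $\frak{M}, w \models \vp$ with $\frak{M} = \langle W, R, D, I \rangle$ based on a frame $\frak{F}$ suitable for $L$. Define $\frak{M}'$ on the same frame, with $I'$ agreeing with $I$ on $P_1, \ldots, P_n$ and with $I'(w', P_{n+1}) = D(w')$ for every $w' \in W$. Then $\frak{M}', w' \models B$ for every $w'$, so in $\frak{M}'$ the subformula $B \imp \psi'$ is equivalent to $\psi'$ at every world. I would then show by induction on $\psi$ that for every subformula $\psi$ of $\vp$, every $w' \in W$, and every assignment $g$ suitable at $w'$, we have $\frak{M}, w' \models^g \psi$ if and only if $\frak{M}', w' \models^g \psi'$. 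The atomic, Boolean, and quantifier cases are immediate since we did not change $D$ nor the interpretation of $P_1, \ldots, P_n$; the modal case follows from the observation above. In particular $\frak{M}', w \models \vp'$, and since $\frak{M}', w \models B$, we conclude $\frak{M}', w \models B \con \vp'$. The underlying frame is unchanged, so it is still suitable for $L$.

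For the backward direction, suppose $\frak{M}, w \models B \con \vp'$. Let $W^* = \{ w' \in W : \frak{M}, w' \models B \}$, and let $\frak{M}^*$ be the restriction of $\frak{M}$ to $W^*$, with the accessibility relation $R \cap (W^* \times W^*)$, the domain function $D \restr W^*$, and interpretations of $P_1, \ldots, P_n$ restricted accordingly. Since $\frak{M}, w \models B$, we have $w \in W^*$. I would prove by induction on $\psi$ that for every subformula $\psi$ of $\vp$, every $w' \in W^*$, and every suitable $g$, $\frak{M}, w' \models^g \psi'$ if and only if $\frak{M}^*, w' \models^g \psi$. The crucial step is the modal case: $\frak{M}, w' \models^g \Box(B \imp \psi_1')$ holds iff for every $w''$ with $w' R w''$, either $\frak{M}, w'' \not\models B$ or $\frak{M}, w'' \models^g \psi_1'$, which is equivalent to $\psi_1'$ holding at every $R$-successor of $w'$ that lies in $W^*$; by induction hypothesis this matches $\frak{M}^*, w' \models^g \Box \psi_1$. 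The quantifier case works because domains of worlds in $W^*$ are unchanged.

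It remains to verify that the restricted frame is still suitable for $L$. Taking a subset of worlds and the induced restriction of $R$ preserves irreflexivity, reflexivity, transitivity, antisymmetry, and the absence of infinite ascending chains; the condition $w' R w''$ implies $D(w') \subseteq D(w'')$ is also inherited, as is the constant-domain property if it held originally. Hence $\frak{M}^*$ is based on a frame of the same kind as $\frak{F}$, so $\frak{M}^*, w \models \vp$ witnesses $L$-satisfiability of $\vp$. The main technical point requiring care is simply the alignment in the modal case between the $B$-guard in the translation and the set $W^*$ used in the restriction; once this is identified, everything else is routine, and the proof goes through uniformly for $\textbf{QK}$, $\textbf{QGL}^{sem}$, and $\textbf{QGrz}^{sem}$ and under either expanding or constant domain semantics.
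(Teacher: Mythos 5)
Your proof is correct and follows exactly the paper's argument: extend the model by interpreting $P_{n+1}$ as the full domain at every world for the forward direction, and restrict to the submodel of $B$-worlds for the converse, checking that the relevant frame conditions (irreflexivity/reflexivity, transitivity, antisymmetry, absence of infinite ascending chains) are inherited by submodels. The paper states this very tersely; you have simply made the two inductions and the frame-preservation check explicit.
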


\begin{proof}
  Assume that \sat{M}{w_0}{\vp}, for some \mmodel{M} based on a frame
  for $L$ and some $w_0$. Let $\mmodel{M'}$ be a model that extends
  $\mmodel{M}$ by setting $I(w, P_{n+1}) = D(w)$, for every $w \in W$.
  Then, \sat{M'}{w_0}{B \con \vp'}. Conversely, assume that
  \sat{M}{w_0}{B \con \vp'}, for some \mmodel{M} based on a frame for
  $L$. Let \mmodel{M'} be a submodel of \mmodel{M} with
  $W' = \{ w : \sat{M}{w}{B} \} $.  Then, \sat{M'}{w_0}{\vp}.  Note
  that, for every logic $L$ in the statement of the Lemma, \mmodel{M'}
  is based on a frame for $L$.
\end{proof}

\begin{remark}
  \label{rem:B}
  In view of the proof of Lemma~\ref{lem:vp-B}, if $B \con \vp'$ is
  satisfied in a model \mmodel{M}, we can assume, without a loss of
  generality, that $B$ is true in \mmodel{M}.
\end{remark}

Now, given a monadic predicate letter $P$, we inductively define the
following sequence of formulas:
\begin{center}
  \begin{tabular}{lll}
    $\delta_1(x)$ & = & $P (x) \con \Diamond (\neg P(x) \con \Diamond \Box^+
             P(x) )$; \\
    $\delta_{m+1}(x)$ & = & $P (x) \con \Diamond (\neg P(x) \con \Diamond \delta_{m}(x))$.
\end{tabular}
\end{center}
Using formulas from this sequence, define, for every
$k \in \{1, \ldots, n+1 \}$, the formula
$$
\alpha_k (x ) = \delta_{k}(x) \con \neg \delta_{k+1}(x) \con
\Diamond \Box^+ \neg P(x).
$$
%
We now define models associated with formulas
$\alpha_k (x)$.  For every $k \in \{1, \ldots, n+1 \}$, let
$\frak{F}_k = \langle W_k,R_k\rangle$ be a Kripke frame where
$W_k = \{w_k^0,\ldots,w_k^{2k}\}\cup\{w_k^\ast\}$ and $R_k$ is the
transitive closure of the relation
$\{\langle w_k^i,w_k^{i+1}\rangle : 0\leqslant i < 2k\}\cup\{\langle
w_k^0,w_k^\ast\rangle\}$.
For every such $k$, let $\frak{M}_k=\langle W_k,R_k,D,I\rangle$ be a
model with constant domains and let $a$ be an individual in the domain
of every $\frak{M}_k$ (other than that, the relationship between the
domains of $\frak{M}_k$s is immaterial at this point). We say that
$\frak{M}_k$ is $a$-suitable if
$$
\frak{M}_k,w\models P[a] ~ \Longleftrightarrow ~ w = w_k^{2i},
\mbox{~for $i\in\{0,\ldots,k\}$}.
$$

\begin{lemma}
\label{lem:model-M_k}
Let $a$ be an individual in the domain of the models
$\frak{M}_1,\ldots,\frak{M}_{n+1}$ and let
$\frak{M}_1,\ldots,\frak{M}_{n+1}$ be $a$-suitable. Then,
$$
\frak{M}_k,w\models \alpha_m[a] ~ \Longleftrightarrow ~
\mbox{$k=m$ and $w=w_k^0$}.
$$
\end{lemma}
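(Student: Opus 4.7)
The plan is to carry out a direct structural analysis of the model $\frak{M}_k$. As a preliminary, I would pin down the truth sets of the basic subformulas that appear inside $\alpha_m$ and $\delta_m$. Since $\frak{M}_k$ is $a$-suitable, $P(a)$ is true precisely at the even-indexed worlds $w_k^0, w_k^2, \ldots, w_k^{2k}$ and false at every odd-indexed world and at $w_k^*$. Moreover, the only $R_k$-dead-ends are $w_k^{2k}$ and $w_k^*$. A direct check then yields: $\Box^+ P(a)$ holds only at $w_k^{2k}$; $\Box^+ \neg P(a)$ holds only at $w_k^*$; $\Diamond \Box^+ P(a)$ holds precisely at $w_k^0, \ldots, w_k^{2k-1}$; and $\Diamond \Box^+ \neg P(a)$ holds only at $w_k^0$, because $w_k^0$ is the unique world with $w_k^0 R_k w_k^*$. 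Consequently, the conjunct $\Diamond \Box^+ \neg P(a)$ of $\alpha_m$ by itself already forces $w = w_k^0$.

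The remaining step is to determine, at $w_k^0$, for which $m$ the conjunction $\delta_m(a) \con \neg \delta_{m+1}(a)$ holds. For this I would prove, by induction on $m \ge 1$, the auxiliary claim
$$
\frak{M}_k, w \models \delta_m[a] \quad \Longleftrightarrow \quad w = w_k^{2i} \mbox{ for some } i \mbox{ with } 0 \le i \le k - m.
$$
For $m = 1$, $\delta_1(a)$ at $w$ forces $w$ to be some $w_k^{2i}$ and demands an $R_k$-successor satisfying $\neg P(a) \con \Diamond \Box^+ P(a)$; by the preliminary, the latter subformula holds exactly at the odd-indexed worlds, and the existence of such a successor from $w_k^{2i}$ is equivalent to $i \le k - 1$. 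For the inductive step, the induction hypothesis gives that $\Diamond \delta_m(a)$ holds at $w_k^{2j+1}$ iff some successor $w_k^{2i'}$ with $i' > j$ satisfies $i' \le k - m$, i.e., iff $j \le k - m - 1$; hence, unfolding one level, $\delta_{m+1}(a)$ at $w_k^{2i}$ reduces to the existence of $j$ with $i \le j \le k - m - 1$, which in turn is equivalent to $i \le k - (m+1)$.

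Combining the two ingredients, $\alpha_m(a)$ can be satisfied only at $w_k^0$, and at $w_k^0$ the conjunction $\delta_m(a) \con \neg \delta_{m+1}(a)$ holds iff $0 \le k - m$ and $0 > k - (m+1)$, i.e., iff $m = k$, which is the desired biconditional.

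The main obstacle, I expect, is the careful bookkeeping of the two dead-end worlds $w_k^{2k}$ and $w_k^*$: it is precisely their absence of successors that lets $\Box^+$ act as a \emph{last level} marker and that pins each $\delta_m$ to its rightmost allowed even-indexed world $w_k^{2(k-m)}$. Once the behavior of $\Box^+ P(a)$ and $\Box^+ \neg P(a)$ has been nailed down, both the induction and the final counting step are short routine calculations.
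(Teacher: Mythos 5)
Your proof is correct and is exactly the direct verification the paper has in mind---the paper's own proof of this lemma is just the word ``Straightforward,'' and your analysis (pinning down where $\Box^{+}P[a]$ and $\Box^{+}\neg P[a]$ hold via the two dead-ends $w_k^{2k}$ and $w_k^{\ast}$, then inducting to show $\delta_m[a]$ holds exactly at $w_k^{2i}$ with $i\leqslant k-m$) supplies the omitted details accurately.
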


\begin{proof}
Straightforward.
\end{proof}

\begin{remark}
  \label{rem:ref-closures}
  Notice that the statement of Lemma~\ref{lem:model-M_k} holds true if
  we replace the accessibility relations in
  $\frak{M}_1,\ldots,\frak{M}_{n+1}$ with their reflexive closures.
\end{remark}

Now, for every $\alpha_k(x)$, where $k \in \{1, \ldots, n+1 \}$,
define
$$
\beta_k(x) = \neg P(x)\con\Diamond \alpha_k(x).
$$
Let $\varphi^\ast$ be the result of replacing in $\varphi'$ of
$P_k(x)$ with $\beta_k(x)$, for every $k\in\{1,\ldots,n+1\}$.

Call a formula $\psi$ containing only monadic predicate letters {\em
  \mbox{$L$-suitable}} if either $\psi$ is not $L$-satisfiable or
$\psi$ is satisfiable in a model $\frak{M}$, based on a frame with
constant domains validating $L$, satisfying the downward inheritance
property for monadic letters: $\frak{M}\models \Diamond P(x)\to P(x)$,
for every monadic predicate letter~$P$ (we call such a model
\emph{$L$-suitable}).

\begin{lemma}
  \label{lem:vp-ast}
  Let $L \in \{{\bf QK}, {\bf QGL}^{sem}, {\bf QGrz}^{sem} \}$ and let
  $\vp$ be an $L$-suitable formula.  Then, $B \con \vp'$ is
  $L$-satisfiable if, and only if,
  $\forall x\, \beta_{n+1} (x) \con \vp^\ast$ is $L$-satisfiable.
\end{lemma}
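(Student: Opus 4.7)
The plan is to prove the biconditional by constructing models in each direction. For $(\Rightarrow)$, suppose $\mmodel{M}, w_0 \models B \con \vp'$ with $\mmodel{M} = \langle W, R, D, I \rangle$ based on an $L$-frame. By Remark~\ref{rem:B} I may assume $B$ holds everywhere in $\mmodel{M}$, so every gadget of the form $\mmodel{M}_{n+1}^{w,a}$ will in fact be attached. I build $\mmodel{M}^\ast$ by augmenting $\mmodel{M}$ with private gadgets: for every triple $(w, k, a)$ with $w \in W$, $k \in \{1, \ldots, n+1\}$, $a \in D(w)$, and $\mmodel{M}, w \models P_k[a]$, I glue a fresh $a$-suitable copy of $\mmodel{M}_k$ at $w$ via an edge from $w$ to the copy's root $w_k^0$. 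The single predicate $P$ is set empty at every world inherited from $W$ and follows $a$-suitability inside each attached gadget; domains at new gadget worlds are set to $D(w)$, preserving monotonicity. For ${\bf QK}$ no further closure is needed; for ${\bf QGL}^{sem}$ and ${\bf QGrz}^{sem}$ I take the appropriate transitive (resp.\ reflexive-transitive, using Remark~\ref{rem:ref-closures}) closure. Since gadgets are finite and the original frame is an $L$-frame, $\mmodel{M}^\ast$ is based on a frame for $L$.

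The core claim, proved by induction on a subformula $\phi$ of $\vp$, is that $\mmodel{M}^\ast, w \models^g \phi^\ast$ iff $\mmodel{M}, w \models^g \phi'$ for every $w \in W$ and every assignment $g$ into $D(w)$. The atomic case uses Lemma~\ref{lem:model-M_k} (with Remark~\ref{rem:ref-closures} in the reflexive-transitive case) together with the isolation of each gadget copy---its internal worlds reach only worlds of the same copy---so that $\alpha_k[a]$ is witnessed in $\mmodel{M}^\ast$ precisely at the roots of $a$-suitable copies of $\mmodel{M}_k$; hence $\beta_k[a]$ at $w$ holds iff such a copy has been glued at $w$, i.e.\ iff $P_k[a]$ holds at $w$ in $\mmodel{M}$. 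Boolean and quantifier cases are routine. The modal case is powered by two auxiliary observations: (i) $\mmodel{M}^\ast, w \models \forall x\, \beta_{n+1}(x)$ for every $w \in W$, since $B$ holds throughout $\mmodel{M}$ and hence $\mmodel{M}_{n+1}^{w,a}$ is attached for every $a \in D(w)$; and (ii) $\mmodel{M}^\ast, u \not\models \forall x\, \beta_{n+1}(x)$ for every gadget world $u$, by direct case analysis inside each gadget (either $P$ is true at $u$ for the distinguished individual, defeating the $\neg P(x)$ conjunct, or no accessible world satisfies $\alpha_{n+1}$ for any element of the domain). Together (i) and (ii) make the guard $\forall x\, \beta_{n+1}(x)$ in $(\Box \phi)^\ast$ single out exactly the original $R$-successors of $w$, mirroring the role of the guard $B$ in $(\Box \phi)'$.

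For $(\Leftarrow)$, given $\mmodel{M}^\ast, w_0 \models \forall x\, \beta_{n+1}(x) \con \vp^\ast$ with $\mmodel{M}^\ast$ based on an $L$-frame, I keep the underlying frame and define $\mmodel{M}$ by $I^{\mmodel{M}}(w, P_k) = \{ a \in D(w) : \mmodel{M}^\ast, w \models \beta_k[a] \}$ for each $k \in \{1, \ldots, n+1\}$. Then $B$ at $w_0$ in $\mmodel{M}$ follows immediately from $\forall x\, \beta_{n+1}(x)$ at $w_0$ in $\mmodel{M}^\ast$, and a direct induction $\mmodel{M}, w \models^g \phi' \Leftrightarrow \mmodel{M}^\ast, w \models^g \phi^\ast$---with the $\Box$-case using that $B$ and $\forall x\, \beta_{n+1}(x)$ coincide at every world by definition of $I^{\mmodel{M}}(w, P_{n+1})$---completes the argument.

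The hardest step is the atomic case of the $(\Rightarrow)$ induction when $L \in \{{\bf QGL}^{sem}, {\bf QGrz}^{sem}\}$: under transitive closure a gadget glued at $v$ becomes accessible from every $w$ with $w R^+ v$, threatening the atomic equivalence $\beta_k[a] \leftrightarrow P_k[a]$ at $w$. I expect to handle this by passing first to the subframe of $\mmodel{M}$ generated by $w_0$ (and, if necessary, to a tree-like unravelling thereof) and choosing the gluing so that each gadget is reachable from only one original world, so that the equivalence $\beta_k[a] \leftrightarrow P_k[a]$ holds at every world that is relevant to evaluating $\vp'$ via the guard $\forall x\, \beta_{n+1}(x)$.
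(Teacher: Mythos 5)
Your construction is essentially the paper's: attach the gadget models $\frak{M}_k$ below the worlds of a model of $B \con \vp'$, use Lemma~\ref{lem:model-M_k} to obtain the atomic equivalence $\frak{M},w\models P_k[a] \Leftrightarrow \frak{M}^\ast,w\models \beta_k[a]$, and run an induction in which the guard $\forall x\, \beta_{n+1}(x)$ plays for $\vp^\ast$ exactly the role that $B$ plays for $\vp'$, singling out the original worlds among the $R^\ast$-successors. Your two deviations are harmless: you glue one $a$-suitable copy of $\frak{M}_k$ per triple $(w,k,a)$ with $\frak{M},w\models P_k[a]$, where the paper glues a single copy of $\frak{F}_k$ per pair $(w,k)$ and packs the dependence on $a$ into $I^\ast$; and for the right-to-left direction you rebuild a model by setting $I(w,P_k)=\{a : \frak{M}^\ast,w\models\beta_k[a]\}$, where the paper simply invokes closure of $L$ under predicate substitution---these are the same argument. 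For ${\bf QK}$ your proof is complete and correct.

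The gap is in your treatment of what you rightly single out as the hardest step. For $L\in\{{\bf QGL}^{sem},{\bf QGrz}^{sem}\}$ the relation of $\frak{M}^\ast$ must be transitively closed, and then a gadget glued below $v$ is accessible from \emph{every} predecessor $w$ of $v$. Passing to the subframe generated by $w_0$, or to a tree-like unravelling, cannot prevent this: the path from $w$ to $v$ to the root of $v$'s gadget survives any unravelling of $\frak{M}$, and transitivity then forces an edge from $w$ directly to that root. Concretely, suppose $w R v$, $\frak{M},v\models P_k[a]$, and $\frak{M},w\not\models P_k[a]$ with $a\in D(w)$. The root of the $a$-suitable copy of $\frak{M}_k$ below $v$ still satisfies $\alpha_k[a]$ (its generated submodel is untouched by the closure), it is $R^\ast$-accessible from $w$, and $P$ is empty at $w$; hence $\frak{M}^\ast,w\models\beta_k[a]$ while $\frak{M},w\not\models P_k[a]$, and the base case of your $(\Rightarrow)$ induction fails at $w$. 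So the repair you propose---``each gadget reachable from only one original world''---is unachievable under transitivity, and the transitive cases of the lemma are not established by your argument. (The paper's own proof of these cases consists of replacing $R^\ast$ by its transitive, respectively reflexive-transitive, closure and appealing to Lemma~\ref{lem:model-M_k} and Remark~\ref{rem:ref-closures}; it does not use an unravelling, and the interference you identify is precisely the point at which any careful writeup must say more than either your proposal or that one-line remark does.)
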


\begin{proof}
  The right-to-left direction follows from the closure of $L$ under
  predicate substitution.  For the other direction, suppose that
  $B \con \vp'$ is {\bf QK}-satisfiable. Let
  $\frak{M}=\langle W,R,D,I\rangle$ be a model such that
  $\frak{M},w_0\models B \con \vp'$, for some $w_0\in W$. In view of
  Remark~\ref{rem:B}, we may assume, without a loss of generality,
  that $\frak{M}\models B$.

  For every $w\in W$ and every frame $\frak{F}_k$
  ($1 \leqslant k \leqslant n +1$), let
  $\frak{F}_k^w = \langle \{w\}\times W_k,R_k^w\rangle$ be an
  isomorphic copy of $\frak{F}_k$.  For every $w\in W$ and
  $k\in \{1,\ldots,n+1\}$, add $\{w\}\times W_k$ to $W$ to obtain the
  set $W^\ast$.  Define the relation $R^\ast$ on $W^\ast$ as follows:

  $$
  R^\ast = R\cup\bigcup 
  \big\{R_k^w \cup \{\langle w,(w,w_k^0)\rangle\} : w\in W, 1\leqslant k\leqslant n+1\big\}.
  $$
  Thus, for every $w\in W$, we make the roots of frames
  $\frak{F}_1^w,\ldots,\frak{F}_{n+1}^w$ accessible from~$w$.  Next,
  for every $u\in W^\ast$ let
$$
D^\ast(u) = \left\{
            \begin{array}{ll}
            D(u), & \mbox{if $u\in W$}, \\
            D(w), & \mbox{if $u\in \{w\}\times W_k$}.
            \end{array}
            \right.
$$
Finally, for every $u\in W^\ast$ and every $a\in D^\ast(u)$, let
$$
\langle a \rangle \in I^\ast(u,P) ~~\leftrightharpoons~~ \parbox[t]{250pt}{$u
  = (w, w_k^{2i})$, for some $w\in W$,
  $k\in\{1,\ldots,n+1\}$,
  and $i\in\{0,\ldots,k\}$;
  and $\frak{M},w\models P_k[a]$.}
$$
Let $\frak{M}^\ast = \langle W^\ast,R^\ast,D^\ast,I^\ast\rangle$.  It
immediately follows from Lemma~\ref{lem:model-M_k} that, for every
$w\in W$, every $a\in D(w)$, and every $k\in\{1,\ldots,n+1\}$,
$$
\frak{M},w\models P_k[a] ~ \Longleftrightarrow ~
\frak{M}^\ast,w\models \beta_k [a].
$$
We can then show that, for every $w\in W$, every subformula
$\psi(x_1,\ldots,x_m)$ of $\vp$, and every
$a_1,\ldots,a_m\in D(w)$,
$$
\frak{M},w\models \psi'[a_1, \ldots, a_m] ~ \Longleftrightarrow ~
\frak{M}^\ast,w\models  \psi^\ast[a_1, \ldots, a_m],
$$
where $\psi^\ast(x_1,\ldots,x_m)$ is obtained by substituting
$\beta_1(x),\ldots, \beta_{n+1}(x)$
for $P_1(x),\ldots,P_{n+1}(x)$ in $\psi'(x_1,\ldots,x_m)$.

The proof proceeds by induction. We only consider the modal case,
leaving the rest to the reader. In this case,
$\psi'(x_1,\ldots,x_m) = \Box(\forall x\, P_{n+1}(x) \to
\chi'(x_1,\ldots,x_m))$
and
$\psi^\ast(x_1,\ldots,x_m) = \Box(\forall x\, \beta_{n+1}(x) \to
\chi^\ast(x_1,\ldots,x_m))$.
If $\frak{M}^\ast,w\not\models \psi^\ast[a_1,\ldots,a_m]$, then there
exists $w'\in W^\ast$ with $wR^\ast w'$ such that
$\frak{M}^\ast,w'\models \forall x\, \beta_{n+1}(x)$ and
$\frak{M}^\ast,w'\not\models\chi^\ast[a_1,\ldots,a_m]$. The condition
$\frak{M}^\ast,w'\models \forall x\,\beta_{n+1}(x)$ guarantees that
$w'\in W$; therefore, we may apply the inductive hypothesis to
conclude that $\frak{M},w'\not\models\chi'[a_1,\ldots,a_m]$.  The
other direction is straightforward.

Thus,
$\frak{M}^\ast,w_0\models \forall x\, \beta_{n+1} (x) \con \vp^\ast$,
i.\,e., $\forall x\, \beta_{n+1} (x) \con \vp^\ast$ is {\bf
  QK}-satisfiable.

For ${\bf QGL}^{sem}$ and ${\bf QGrz}^{sem}$, the proof is similar.
The only difference is that, when defining the model
$\frak{M}^{\ast}$, instead of $R^\ast$ mentioned above, we take as the
accessibility relations its transitive, and its reflexive and
transitive, closure, respectively.  We only observe that, for atomic
formulas, the proof relies on $L$-suitability, for
$L\in \{{\bf QGL}^{sem},{\bf QGrz}^{sem}\}$, of $\vp$ and, hence, of
$B \wedge \vp'$: in the construction described above, we begin with an
$L$-suitable model for $B \wedge \vp'$.
\end{proof}

We can now prove our main result in this section.

\begin{theorem}
  \label{thr:Grz}
  Let $L$ be a logic such that
  ${\bf QK} \subseteq L\subseteq {\bf QGL}$ or
  ${\bf QK } \subseteq L\subseteq {\bf QGrz}$. Then, $L$ is
  undecidable in the language with two individual variables and a
  single monadic predicate letter.
\end{theorem}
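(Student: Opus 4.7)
The plan is to combine the machinery of Lemmas~\ref{lem:vp-B} and \ref{lem:vp-ast} with the tiling reduction of~\cite{KKZ05} recalled in the discussion preceding this theorem. Starting from an instance $T$ of an undecidable tiling problem, one effectively computes the formula $\xi_T$ with two variables and finitely many monadic predicate letters $P_1,\ldots,P_n$ such that (i)~if $T$ does not tile $\nat\times\nat$, then $\neg\xi_T \in {\bf QK}$, and (ii)~if $T$ tiles $\nat\times\nat$, then $\xi_T$ is satisfiable in both ${\bf QGL}^{sem}$ and ${\bf QGrz}^{sem}$ (using the infinite antichains of irreflexive, resp.\ reflexive, worlds discussed above).

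Next I would apply the embedding developed in the preceding lemmas: from $\xi_T$ construct the formula $\Phi_T = \forall x\, \beta_{n+1}(x) \con \xi_T^{\ast}$, where $\xi_T^{\ast}$ is obtained from $\xi_T'$ by substituting $\beta_k(x)$ for $P_k(x)$ for each $k\in\{1,\ldots,n+1\}$. Crucially, $\Phi_T$ contains only the single monadic predicate letter $P$ figuring in $\delta_m$ and $\alpha_k$, and it uses no more individual variables than $\xi_T$ does. Chaining Lemmas~\ref{lem:vp-B} and \ref{lem:vp-ast} then yields, for every $L^{\ast}\in\{{\bf QK},{\bf QGL}^{sem},{\bf QGrz}^{sem}\}$, that $\xi_T$ is $L^{\ast}$-satisfiable if and only if $\Phi_T$ is $L^{\ast}$-satisfiable.

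The theorem then follows by the standard sandwich argument. Fix $L$ with ${\bf QK}\subseteq L\subseteq{\bf QGL}$, noting that ${\bf QGL}\subseteq{\bf QGL}^{sem}$. If $T$ does not tile $\nat\times\nat$, then $\xi_T$ is not ${\bf QK}$-satisfiable, so neither is $\Phi_T$, whence $\neg\Phi_T\in{\bf QK}\subseteq L$. If $T$ tiles $\nat\times\nat$, then $\xi_T$ is ${\bf QGL}^{sem}$-satisfiable, hence so is $\Phi_T$, whence $\neg\Phi_T\notin{\bf QGL}^{sem}$ and therefore $\neg\Phi_T\notin L$. The map $T\mapsto\neg\Phi_T$ is thus an effective reduction of the complement of the tiling problem to membership in $L$ using only formulas with two individual variables and the single monadic predicate letter~$P$. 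The case ${\bf QK}\subseteq L\subseteq{\bf QGrz}$ is handled identically, with ${\bf QGrz}^{sem}$ replacing ${\bf QGL}^{sem}$.

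The only real subtlety here is the two-sided alignment of the sandwich: the negative side of the reduction must sit inside the weakest logic of the interval (${\bf QK}$), while the positive side must be witnessed by a model for the strongest logic of the interval (${\bf QGL}^{sem}$ or ${\bf QGrz}^{sem}$). This alignment is exactly what Lemma~\ref{lem:vp-ast} delivers, since the constructed model $\mmodel{M}^\ast$ is obtained from $\mmodel{M}$ by grafting on the finite gadgets $\frak{F}_k^w$ while taking, respectively, the plain, transitive, or reflexive–transitive closure of the accessibility relation, and Remark~\ref{rem:ref-closures} guarantees that Lemma~\ref{lem:model-M_k} survives these closures so that each $\alpha_k$ still isolates the intended witness world. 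Once this uniform preservation across the three reference logics is in hand, no further work is required.
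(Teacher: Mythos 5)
Your proposal is correct and follows essentially the same route as the paper: the same embedding $e(\vp)=\forall x\,\beta_{n+1}(x)\con\vp^{\ast}$, the same chaining of Lemmas~\ref{lem:vp-B} and~\ref{lem:vp-ast} to transfer satisfiability across ${\bf QK}$, ${\bf QGL}^{sem}$, and ${\bf QGrz}^{sem}$, and the same sandwich argument reducing the complement of the tiling problem to membership in $L$. The paper states this more tersely (via the undecidable fragment $F$ and its image $F^{e}$), but the content, including the alignment of the negative side with ${\bf QK}$ and the positive side with ${\bf QGL}^{sem}$ or ${\bf QGrz}^{sem}$, is identical.
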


\begin{proof}
    We first establish the following:

    \begin{sublemma}
      \label{lem:sublemma-1} Let
      $L\in \{{\bf QK}, {\bf QGL}^{sem},{\bf QGrz}^{sem}\}$.  Then,
      the problem of $L$-satisfiability of $L$-suitable formulas
      containing only two individual variables and only monadic
      predicate letters is undecidable.
    \end{sublemma}

    \begin{proof}  By reduction from an undecidable~\cite{Berger66}
      $\nat \times \nat$ tiling problem.

      Kontchakov, Kurucz, and Zakharyaschev~\cite{KKZ05} define, for a
      finite set $T$ of tile types (for a brief description of the
      tiling problem considered here, see
      Section~\ref{sec:undecidability-int}), the formula $\chi_T$, a
      conjunction of the formulas (we write $H(x,y)$ and $V(x,y)$ for,
      respectively, $\mathit{succ}_H(x,y)$ and $\mathit{succ}_V(x,y)$
      used by Kontchakov, Kurucz, and Zakharyaschev)
      $$ \begin{array}{l} \displaystyle \forall
           x\,\bigvee\limits_{t\in T}(P_t(x)\wedge\bigwedge\limits_{t'\ne
           t}P_{t'}(x));
           \smallskip\\
           \displaystyle \forall x\forall y\,(H(x,y)\to
           \bigwedge\limits_{right(t)\ne left(t')}\neg(P_t(x)\wedge
           P_{t'}(y)));
           \smallskip\\
           \displaystyle \forall x\forall y\,(V(x,y)\to
           \bigwedge\limits_{up(t)\ne down(t')}\neg(P_t(x)\wedge
           P_{t'}(y)));
           \smallskip\\
           \displaystyle \forall x\exists y\,H(x,y) \wedge \forall
           x\exists y\,V(x,y);
           \smallskip\\
           \displaystyle \forall x\forall y\,(H(x,y) \to \Box H(x,y));
           \smallskip\\
           \displaystyle \forall x\forall y\,(V(x,y) \to \Box V(x,y));
           \smallskip\\
           \displaystyle \forall x\forall y\,(\Diamond V(x,y) \to
           V(x,y));
           \smallskip\\
           \displaystyle \forall x\,\Diamond D(x);
           \smallskip\\
           \displaystyle \Box\forall x\forall y\,[V(x,y)\wedge \exists
           x\, (D(x)\wedge H(x,y)) \to \forall y\,(H(x,y)\to \forall
           x\,(D(x)\to V(y,x)))],  \end{array} $$
         and show that $\chi_T$ satisfies the condition
         $$
         \begin{array}{lclr} \mbox{$\chi_T$ is $L$-satisfiable} &
           \Longleftrightarrow & \mbox{$T$ tiles $\nat\times \nat$.}
           & ({\ast})
         \end{array}
         $$

         We effectively construct from $\chi_T$ an $L$-suitable
         formula $\chi^\star_T$ and show that the condition ($\ast$)
         remains satisfied if $\chi_T$ is replaced with
         $\chi^\star_T$.  Since the tiling of $\nat\times \nat$ by $T$
         is undecidable~\cite{Berger66}, the statement of the Sublemma
         follows.

         We first consider the case $L = \mathbf{QK}$. Let
         $\chi^\circ_T$ be the formula obtained from $\chi_T$ by
         substituting $\neg D(x)$ for $D(x)$.  It should be clear that
         $\chi^\circ_T$ is $\mathbf{QK}$-satisfiable if, and only if,
         $\chi_T$ is \mbox{$\mathbf{QK}$-satisfiable}.

         Replace in $\chi^\circ_T$ every occurrence of $\Box\psi$ by
         $\Box(\forall x\,Q(x)\to\psi)$ and substitute into so
         obtained formula the formulas
         $\Diamond(\neg Q_1^H(x)\wedge \neg Q_2^H(y))$ and
         $\Diamond(\neg Q_1^V(x)\wedge \neg Q_2^V(y))$ for,
         respectively, $H(x,y)$ and $V(x,y)$.  Denote the resultant
         formula by $\bar{\chi}^\circ_T$. Lastly, put
         $\chi^\star_T=\forall x\,Q(x)\wedge\bar{\chi}^\circ_T$.

         We first show that $\chi^\star_T$ is satisfiable if, and only
         if, $\chi_T^\circ$ is satisfiable (and so $\chi^\star_T$ is
         satisfiable if, and only if, $T$ tiles $\nat\times \nat$).

         Suppose $\chi_T^\circ$ is satisfiable.  Then, by ($\ast$), there
         exists a tiling $\tau\colon\nat\times\nat\to T$.  We use
         $\tau$ to define the model $\frak{M}$: let
         $$ \begin{array}{lcl}
                 W & = & \{w^\ast\} \cup \{w_{ij} : i,j\in \nat\} \cup \{w'_{ij} : i,j\in \nat\}; \\
                 R & = & \big( \{w^\ast\} \times \{w_{ij} : i,j\in
                         \nat\} \big)
                         \cup \big( \big( \{w^\ast\}\cup\{w_{ij} : i,j\in \nat\} \big)
                          \times \{w'_{ij} : i,j\in \nat\} \big); \\
                 D_w & = &
                            \nat\times\nat,
                            ~ \qquad \hfill\mbox{for every $w\in W'$;}
                 \\
                 I(w,Q) & = &
                                   \left\{ \begin{array}{rl}
                                             \varnothing & \mbox{if $w=w'_{ij}$,}\\
                                             \nat\times\nat
                                                                                   &
                                                                                     \mbox{if
                                                                                     $w\ne
                                                                                     w'_{ij}$;} \end{array}
                                                                                     \right. \\
                 I(w,Q_1^H) & = &
                                   \left\{ \begin{array}{rl}
                                             \nat\times\nat-\{\langle i,j\rangle\} & \mbox{if $w=w'_{ij}$,}\\
                                             \nat\times\nat
                                                                                   &
                                                                                     \mbox{if
                                                                                     $w\ne
                                                                                     w'_{ij}$;} \end{array}
                                                                                     \right. \\
                 I(w,Q_2^H) & = &
                                   \left\{ \begin{array}{rl}
                                             \nat\times\nat-\{\langle i+1,j\rangle\} & \mbox{if $w=w'_{ij}$,}\\
                                             \nat\times\nat
                                                                                     &
                                                                                       \mbox{if
                                                                                       $w\ne
                                                                                       w'_{ij}$;} \end{array}
                                                                                       \right. \\
                 I(w,Q_1^V) & = &
                                   \left\{ \begin{array}{rl}
                                             \nat\times\nat-\{\langle i,j\rangle\} & \mbox{if $w=w'_{ij}$,}\\
                                             \nat\times\nat
                                                                                   &
                                                                                     \mbox{if
                                                                                     $w\ne
                                                                                     w'_{ij}$;} \end{array}
                                                                                     \right. \\
                 I(w,Q_2^V) & = &
                                   \left\{ \begin{array}{rl}
                                             \nat\times\nat-\{\langle i,j+1\rangle\} & \mbox{if $w=w'_{ij}$,}\\
                                             \nat\times\nat
                                                                                     &
                                                                                       \mbox{if
                                                                                       $w\ne
                                                                                       w'_{ij}$;} \end{array}
                                                                                       \right. \\
                 I(w^\ast,D) & = & \nat\times\nat; \\
                 I(w_{ij},D) & = & \nat\times\nat - \{\langle i,j\rangle\}; \\
                 I(w'_{ij},D) & = & \varnothing; \\
                 I(w,P_t) & = & \{\langle
                                 i,j\rangle\in
                                 \nat\times\nat
                                 :
                                 \tau(i,j)=t\},
                                 ~ \hfill\qquad \mbox{for every $w\in W'$;} \\
                 \mathfrak{M} & = & \langle W,R,D,I\rangle. \\
               \end{array}
               $$
               The existence of $\tau$ implies that
               $\mathfrak{M}, w^\ast \models \chi^\star_T$.

               Conversely, suppose $\chi^\star_T$ is satisfiable, i.e.
               $\mathfrak{M}_0, w_0 \models \chi^\star_T$, for some
               model $\mathfrak{M}_0$ and world $w_0$.  If we remove
               from $\mathfrak{M}_0$ the worlds refuting
               $\forall x\,Q(x)$ and define the interpretation of
               letters $H$ and $V$ to be the sets of pairs satisfying,
               respectively,
               $\Diamond(\neg Q_1^H(x)\wedge \neg Q_2^H(y))$ and
               $\Diamond(\neg Q_1^V(x)\wedge \neg Q_2^V(y))$, then
               $\chi^\circ_T$ is satisfied at $w_0$ in the resultant
               model.

               It remains to show that $\chi^\star_T$ is
               $\mathbf{QK}$-suitable.  To see that it is, observe
               that the model $\mathfrak{M}$ defined above is
               $\mathbf{QK}$-suitable.

               Next, suppose
               $L \in \{ \mathbf{QGL}^{sem}, \mathbf{QGrz}^{sem}
               \}$. The argument given above for $\mathbf{QK}$ applies
               as is to $\mathbf{QGL}^{sem}$ since the model
               $\frak{M}$ defined above is a
               $\mathbf{QGL}^{sem}$-model. For $\mathbf{QGrz}^{sem}$,
               the argument is similar---the only difference is that,
               in defining a $\mathbf{QGrz}^{sem}$-suitable model
               satisfying $\chi^\star_T$, we take as the accessibility
               relation the reflexive closure of the relation $R$
               defined above.

               Since $\chi^\star_T$ contains only two individual
               variables and only monadic predicate letters, the
               statement of the sublemma follows.
             \end{proof}

             Now, let $F = \{ \neg \chi^\star_T :\, \mbox{$T$ tiles
               $\nat\times \nat$} \}$ (thus, $F$ contains only
             $L$-suitable formulas with two individual variables and
             only monadic predicate letters).  It follows from the
             proof of Sublemma~\ref{lem:sublemma-1} that
             $\mathbf{QK}\cap F = \mathbf{QGL}^{sem}\cap F =
             \mathbf{QGrz}^{sem}\cap F$ and that $\mathbf{QK}\cap F$
             is undecidable.


By Lemmas~\ref{lem:vp-B} and~\ref{lem:vp-ast}, for
$L\in\{\mathbf{QK},\mathbf{QGL}^{sem},\mathbf{QGrz}^{sem}\}$,
$$
\begin{array}{lcl}
  \neg \chi^\star_T \in L\cap F & \Longleftrightarrow & \forall x\, \beta_{n+1} (x) \imp
                                                        \neg (\neg \chi^\star_T)^\ast \in L,
  \end{array}
  $$
  which implies the statement of the theorem.
\end{proof}

\begin{corollary}
  \label{cor:Grz}
  {\bf QK}, {\bf QT}, {\bf QD}, {\bf QK4}, {\bf QS4}, {\bf QGL}, and
  {\bf QGrz} are undecidable in the language with two individual
  variables and a single monadic predicate letter.
\end{corollary}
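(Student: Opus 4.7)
The plan is to derive the corollary as a direct application of Theorem~\ref{thr:Grz} by verifying that each logic in the list falls within one of the intervals $[{\bf QK}, {\bf QGL}]$ or $[{\bf QK}, {\bf QGrz}]$. The lower bound ${\bf QK} \subseteq L$ is immediate for every logic on the list, since each of them is a normal quantified modal logic obtained by extending ${\bf QK}$ with additional axioms.

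For the upper bounds, I would proceed as follows. Since ${\bf QGrz}$ extends ${\bf QS4}$ with the Grzegorczyk axiom, we have ${\bf QS4} \subseteq {\bf QGrz}$, and therefore ${\bf QK4} \subseteq {\bf QS4} \subseteq {\bf QGrz}$ and ${\bf QT} \subseteq {\bf QS4} \subseteq {\bf QGrz}$. Reflexivity entails seriality, so the ${\bf D}$-axiom $\Box p \imp \Diamond p$ is already derivable in ${\bf QT}$; hence ${\bf QD} \subseteq {\bf QT} \subseteq {\bf QGrz}$. For the other interval, it suffices to note that ${\bf QGL}$ is obtained from ${\bf QK4}$ by adding the L\"ob axiom, so ${\bf QK4} \subseteq {\bf QGL}$ and ${\bf QGL}$ itself sits at the top of $[{\bf QK}, {\bf QGL}]$.

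Each of ${\bf QK}, {\bf QT}, {\bf QD}, {\bf QK4}, {\bf QS4}, {\bf QGrz}$ thus lies in $[{\bf QK}, {\bf QGrz}]$, while ${\bf QGL}$ lies in $[{\bf QK}, {\bf QGL}]$; invoking Theorem~\ref{thr:Grz} then yields undecidability of each logic in the restricted language with two individual variables and a single monadic predicate letter. There is no genuine obstacle to overcome in this proof; the corollary amounts to a bookkeeping check of standard inclusions between familiar normal quantified modal logics, and the whole content of the undecidability statement is already carried by Theorem~\ref{thr:Grz}.
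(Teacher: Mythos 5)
Your proposal is correct and matches the paper's intent exactly: the paper states the corollary without proof precisely because it is the routine verification you give, namely that each listed logic lies in $[{\bf QK}, {\bf QGrz}]$ (via ${\bf QD} \subseteq {\bf QT} \subseteq {\bf QS4} \subseteq {\bf QGrz}$ and ${\bf QK4} \subseteq {\bf QS4}$) or in $[{\bf QK}, {\bf QGL}]$, so Theorem~\ref{thr:Grz} applies directly. All the inclusions you cite are standard and correct.
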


\begin{remark}
  Theorem~\ref{thr:Grz} and Corollary~\ref{cor:Grz} hold true if we
  replace every logic $L$ mentioned in their statements with
  $L \oplus bf$, where
  $bf = \forall x\, \Box P(x) \imp \Box\, \forall x\, P(x)$; adding
  $bf$ to $L$ forces us to consider only predicate frames for $L$ with
  constant domains.
\end{remark}

We conclude this section by noticing that the results obtained herein
are quite tight.  In has been shown in~\cite{WZ01}, Theorem~5.1, that
for logics {\bf QK}, {\bf QT}, {\bf QK4}, and {\bf QS4}, adding---on
top of the restriction to at most two individual variables and a
single monadic predicate letter---the very slight restriction that
modal operators apply only to formulas with at most one free
individual variable results in decidable fragments.  As noticed
in~\cite{WZ01}, the same holds true for the other logics mentioned in
Corollary~\ref{cor:Grz}.

\subsection{Sublogics of {\bf QKTB}}

We now prove results similar to those established in the preceding
section for logics in the interval $[{\bf QK}, {\bf QKTB}]$, where
{\bf QKTB} is the predicate logic of reflexive and symmetric
frames. In so doing, we use an adaptation of a technique used
in~\cite{RSh18b} for proving results about computational complexity of
finite-variable fragments of sublogics of the propositional logic {\bf
  KTB}.



We proceed as in the previous section right up to the point where
formulas $\alpha_k$ and models $\mmodel{M}_k$ are defined.
Then, we define the formulas $\alpha_k$ as follows.  First, let
$$
\begin{array}{rclcrcl}
  \Box^0 \vp &=& \vp, & & \Box^{\leqslant 0} \vp &=& \vp, \\
  \Box^{n+1} \vp &=&  \Box \Box^n \vp, & & \Box^{\leqslant n + 1} \vp &=&
                                                                         \Box^{\leqslant n} \vp \con
                                                                         \Box^{n+1} \vp, \\
  \Diamond^n \vp &=& \neg \Box^n \neg \vp, && \Diamond^{\leqslant n} \vp &=& \neg \Box^{\leqslant n} \neg \vp.
\end{array}
$$
Next, inductively define, for every $k \in \{1, \ldots, n+1 \}$, the
following sequence of formulas:
\begin{center}
  \begin{tabular}{lll}
    $\delta(x)$ & = & $\Box^+ P(x)$; \\
    $\delta_k^k(x)$ & = & $\Box^{\leqslant k} \neg P(x)\con
                          \Diamond^{k+1} P(x)\con \Diamond^{k+2}\delta(x)$; \\
    $\delta_{i}^k(x)$ & = & $\Box^{\leqslant i}\neg P(x)\con
                            \Diamond^{i+1} P(x)\con
                            \Box \Diamond^{i+1} P(x)\con
                            \Diamond^{2i+3}\delta^k_{i+1}(x)$, where
                            $1\leqslant i < k$.
\end{tabular}
\end{center}
For notational convenience, let $\delta^1_2(x) = \delta(x)$.

Lastly, let, for every $k \in \{1, \ldots, n+1 \}$,
$$
\alpha_k (x) = P(x) \con \Diamond^2 \delta^k_1(x) \con \neg \Diamond^3
\delta^k_2(x). 
$$

Now we define models $\mmodel{M}_k$ associated with formulas
$\alpha_k$.  Given an individual $a$ and $k \in \{1, \ldots, n+1 \}$,
a model $\frak{M}_k$, whose domain contains $a$, looks as follows.
For brevity, we call some worlds $a$-worlds; if a world is not an
$a$-world, we call it an $\bar{a}$-world.  The model is a chain of
worlds whose root, $r_k$, is an $a$-world.  The root is part of a
pattern of worlds, described below, which is in turn succeeded by
three final $a$-worlds.  The pattern looks as follows: a single
$a$-world is followed by $2i + 1$ $\bar{a}$-worlds, for
$1 \leqslant i \leqslant k$.  Thus the chain looks as follows: the
root (an $a$-world), then three $\bar{a}$-worlds, then an $a$-world,
then five $\bar{a}$-worlds, then an $a$-world, \ldots, then an
$a$-world, then $2k + 1$ $\bar{a}$-worlds, then three $a$-worlds.  The
accessibility relation between the worlds of $\mmodel{M}_k$ is both
reflexive and symmetric.

We say that $\mmodel{M}_k$ is $a$-suitable if
$$
\frak{M}_k,w\models P[a] ~ \Longleftrightarrow ~ w \mbox{~is an
  $a$-world}.
$$
We can, then, prove the following analogue of
Lemma~\ref{lem:model-M_k}.
\begin{lemma}
  \label{lem:model-M_k-QKTB}
  Let $a$ be an individual in the domain of the models
$\frak{M}_1,\ldots,\frak{M}_{n+1}$ and let
$\frak{M}_1,\ldots,\frak{M}_{n+1}$ be $a$-suitable. Then,
$$
\frak{M}_k,w\models \alpha_m[a] ~ \Longleftrightarrow ~
\mbox{$k=m$ and $w=r_k$}.
$$
\end{lemma}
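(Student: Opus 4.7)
The proof will parallel that of Lemma~\ref{lem:model-M_k}, but the key difference is that on the reflexive--symmetric chain $\frak{M}_k$ the operator $\Diamond^n\vp$ expresses ``$\vp$ holds at some world of graph-distance at most $n$'', and dually $\Box^{\leqslant n}\vp$ expresses ``$\vp$ holds at every world of graph-distance at most $n$''. So the analysis reduces to keeping careful track of graph-distances inside $\frak{M}_k$. My plan is to work outward from the innermost $\delta$ to $\alpha_m$, establishing at each stage exactly which worlds can witness the corresponding formula (when $x := a$).

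First I would record the base observation that $\delta(x)=\Box^+P(x)$ is satisfied precisely at those $a$-worlds all of whose chain-neighbours are also $a$-worlds; by construction of $\frak{M}_k$, these sit only in the tail of three final $a$-worlds. Next, by downward induction on $i$ from $k$ down to $1$, I would show that $\delta_i^k(x)$ is satisfied at $w$ if, and only if, $w$ is the midpoint of the $i$-th $\bar a$-block of $\frak{M}_k$. For the base case $i=k$: the conjunct $\Box^{\leqslant k}\neg P(x)$ forces $w$ to lie in a $\bar a$-block of length at least $2k+1$, which in $\frak{M}_k$ only the last block can afford, and then $\Diamond^{k+1}P(x)$ pins $w$ to its exact midpoint, while $\Diamond^{k+2}\delta(x)$ is witnessed by the tail of $a$-worlds at graph-distance exactly $k+2$. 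For the inductive step, the induction hypothesis reduces $\Diamond^{2i+3}\delta_{i+1}^k(x)$ to ``the midpoint of the $(i+1)$-th block lies at graph-distance at most $2i+3$ from $w$'', and the arithmetic identity that the graph-distance between consecutive block midpoints is $i+1+(i+2)=2i+3$ leaves zero slack, forcing $w$ to be the midpoint of the $i$-th block.

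Given this, the lemma follows quickly. For $m=k$, the outer formula $\alpha_k(x)=P(x)\con\Diamond^2\delta_1^k(x)$ is satisfied at $w$ iff $w$ is an $a$-world of graph-distance at most $2$ from the midpoint of the first $\bar a$-block (of length $3$); the construction of $\frak{M}_k$ gives exactly one such world, namely the root $r_k=w_k^0$, at graph-distance $2$. For $m\neq k$, the backward recursion unravelling $\alpha_m$ demands a sequence of $m$ nested $\bar a$-blocks of strictly increasing odd lengths $3,5,\ldots,2m+1$ culminating in a $\Box^+P$-anchor; since $\frak{M}_k$ provides blocks of lengths $3,5,\ldots,2k+1$ followed by its anchor, the chain of conjuncts cannot be satisfied unless $m=k$, as either $\Box^{\leqslant i}\neg P(x)$ fails for some $i$ (block too short) or the anchor is reached after the wrong number of distance-$2i+3$ hops.

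The main obstacle will be the case analysis at the inductive step: ruling out that some ``off-midpoint'' position in a $\bar a$-block longer than $2i+1$, or a position in some block of index $j\neq i$, simultaneously satisfies all three conjuncts of $\delta_i^k(x)$. The radii $i$, $i+1$, $2i+3$ are tuned so that each candidate fails exactly one conjunct, and making this rigorous requires enumerating the possible block and offset for the candidate witness and invoking the exact distances between consecutive block midpoints. The remaining verifications (the $a$-world analysis for $\delta$, and that $r_k$ really does witness $\alpha_k$) are routine.
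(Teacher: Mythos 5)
The paper gives no argument here---its proof of this lemma is the single word ``Straightforward'', with the detailed bookkeeping delegated to Lemma~3.9 of~\cite{RSh18b}---so your sketch must stand on its own, and its central inductive claim is false. You assert that $\delta_i^k(x)$ is satisfied (under $x:=a$) exactly at the midpoint of the $i$-th $\bar{a}$-block, the radii $i$, $i+1$, $2i+3$ leaving ``zero slack''. But $\Diamond^{2i+3}\delta_{i+1}^k(x)$ is an \emph{at-most}-distance condition, not an exact one, so a candidate sitting right next to a witness of $\delta_{i+1}^k$ passes that conjunct trivially. Concretely, for $k=2$ the chain $\frak{M}_2$ is $r_2$, three $\bar{a}$-worlds, an $a$-world $v$, five $\bar{a}$-worlds $t_1,\dots,t_5$, then three $a$-worlds $f_1,f_2,f_3$. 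Your base case correctly places the unique witness of $\delta_2^2[a]$ at $t_3$; but then $t_2$ satisfies $\Box^{\leqslant 1}\neg P[a]$ (its neighbours $t_1,t_3$ are $\bar{a}$-worlds), $\Diamond^{2}P[a]$ (via $v$ at distance $2$), and $\Diamond^{5}\delta_2^2[a]$ (via $t_3$ at distance $1$), so $\delta_1^2[a]$ holds at $t_2$ (and symmetrically at $t_4$), not only at the midpoint of the first block. The promised case analysis in which ``each candidate fails exactly one conjunct'' therefore breaks down precisely for candidates inside block $i+1$ flanking its midpoint.

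This is not a repairable bookkeeping slip within your framework, because the slack propagates to $\alpha_m$ itself: in $\frak{M}_1$ (chain $r_1$, then $u_1,u_2,u_3$, then $f_1,f_2,f_3$) the unique witness of $\delta_1^1[a]$ is $u_2$, and the $a$-world $f_1$ lies at distance $2$ from $u_2$, so $\frak{M}_1,f_1\models P[a]\con\Diamond^2\delta_1^1[a]$, i.e., $\frak{M}_1,f_1\models\alpha_1[a]$ with $f_1\neq r_1$. Under the literal reading of the formulas and models as printed in this paper, the very biconditional you are asked to prove thus fails, which indicates that the construction reproduced here suppresses details present in~\cite{RSh18b}, where the witness sets of the $\delta_i^k$ are computed exactly rather than assumed to be singletons. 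Your overall strategy---translating $\Box^{\leqslant n}$ and $\Diamond^{n}$ into graph-distance constraints on a reflexive symmetric chain and inducting downward on $i$---is the right kind of argument, but the key step as you state it is incorrect; a correct proof must track the full (non-singleton) witness sets of each $\delta_i^k$ and then check directly which $a$-worlds end up satisfying $\alpha_m$, or else adjust the formulas or the shape of $\frak{M}_k$ so that the intended uniqueness actually holds.
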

\begin{proof}
  Straightforward.
\end{proof}

Let
$$
\beta_k(x) = \neg P(x)\con \Box \Diamond P(x) \con \Diamond
\alpha_k(x),
$$
\noindent and let $\varphi^\ast$ be the result of replacing in
$\varphi'$ of $P_k(x)$ with $\beta_k(x)$, for every
$k\in\{1,\ldots,n+1\}$.


We can then prove the following analogue of Lemma~\ref{lem:vp-ast}:
\begin{lemma}
  \label{lem:vp-ast-QKTB}
  Let $L \in \{{\bf QK}, {\bf QKTB}\}$.  Then, $B \con \vp'$ is
  $L$-satisfiable if, and only if,
  $\forall x\, \beta_{n+1} (x) \con \vp^\ast$ is $L$-satisfiable.
\end{lemma}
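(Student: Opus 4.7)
The plan is to mirror the proof of Lemma~\ref{lem:vp-ast}. The right-to-left direction again follows from closure of $L$ under predicate substitution: given a model witnessing $L$-satisfiability of $\forall x\, \beta_{n+1}(x) \con \vp^\ast$, reinterpreting each $P_k(x)$ by $\beta_k(x)$ on the same frame yields a model of $B \con \vp'$, since $\forall x\, \beta_{n+1}(x)$ becomes $B = \forall x\, P_{n+1}(x)$.

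For the left-to-right direction, I would assume $\mmodel{M}, w_0 \models B \con \vp'$ with $\mmodel{M} = \langle W, R, D, I\rangle$ based on a frame for $L$; by Remark~\ref{rem:B} I may also assume $\mmodel{M} \models B$. The construction of $\mmodel{M}^\ast$ follows the template of Lemma~\ref{lem:vp-ast}, but with the reflexive-symmetric chains appropriate to QKTB: for every $w \in W$ and every $k \in \{1, \ldots, n+1\}$, attach an isomorphic copy $\frak{F}_k^w$ of the chain underlying the present section's $\mmodel{M}_k$, with root $r_k^w$, and add an edge from $r_k^w$ to $w$. Take $R^\ast$ to be the reflexive and symmetric closure of the resulting relation, rather than the transitive or reflexive-transitive closures used previously. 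Set $D^\ast(u) = D(w)$ whenever $u$ lies in $\frak{F}_k^w$ and $D^\ast(u) = D(u)$ for $u \in W$. For each $w \in W$, each $k \in \{1, \ldots, n+1\}$, and each $a \in D(w)$, declare $\frak{F}_k^w$ to be $a$-suitable precisely when $\mmodel{M}, w \models P_k[a]$, and otherwise make $P[a]$ false throughout $\frak{F}_k^w$; set $P$ false on all of $W$.

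The heart of the argument is an analogue of Lemma~\ref{lem:model-M_k-QKTB} for $\mmodel{M}^\ast$, from which the correspondence $\mmodel{M}, w \models P_k[a] \Longleftrightarrow \mmodel{M}^\ast, w \models \beta_k[a]$ is immediate for every $w \in W$, $a \in D(w)$, and $k \in \{1, \ldots, n+1\}$. Given this correspondence, a routine induction on subformulas $\psi$ of $\vp$ entirely analogous to the one in the proof of Lemma~\ref{lem:vp-ast}---in particular, the $\Box$-clause uses $\forall x\, P_{n+1}(x)$ to confine quantification over $R^\ast$-successors to worlds of $W$---yields $\mmodel{M}, w \models \psi'[\bar a] \Longleftrightarrow \mmodel{M}^\ast, w \models \psi^\ast[\bar a]$, whence $\mmodel{M}^\ast, w_0 \models \forall x\, \beta_{n+1}(x) \con \vp^\ast$.

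The main obstacle is precisely the analogue of Lemma~\ref{lem:model-M_k-QKTB} for the augmented model. Unlike in the QK, QGL, and QGrz cases, the symmetry of $R^\ast$ lets every world in $\frak{F}_k^w$ ``see back'' into $W$ via $w$, and from $w$ one can reach not just other worlds of $W$ but also every root $r_j^w$---including $r_{n+1}^w$, which is always $a$-suitable by $B$. I expect the verification to rest on two observations. First, every $w \in W$ has $r_{n+1}^w$ as an $R^\ast$-neighbor at which $P[a]$ holds, so $\Box^{\leqslant 1} \neg P[a]$, and hence each $\delta^k_i[a]$ and each $\alpha_k[a]$, fails at every $w \in W$; this rules out ``escape'' witnesses sitting in $W$. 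Second, the modal depths $\Diamond^{i+1}$, $\Diamond^{2i+3}$ and the block of three final $a$-worlds built into each $\frak{F}_k^w$ are tuned so that any path from $r_k^w$ of the prescribed length witnessing a $\Diamond$-conjunct of $\delta^k_i$ must remain inside $\frak{F}_k^w$, while any escape path traversing $w$ violates one of the $\Box^{\leqslant i} \neg P$ or $\delta$-conjuncts deeper in the nesting. Formally ruling out escape witnesses at each level of the nested $\delta^k_i$---especially for the chains $\frak{F}_j^w$ with $j \neq k$ whose root may still satisfy the outer $P$-conjunct of $\alpha_k$---is where the bulk of the technical work lies.
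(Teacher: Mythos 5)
Your proposal follows the paper's proof essentially verbatim: the paper likewise reduces everything to the construction of Lemma~\ref{lem:vp-ast} plus the single observation that, owing to symmetry, the roots of the attached chains can now see back into $W$ without disturbing the truth of the formulas $\alpha_k$ at the chain worlds --- exactly the point you isolate as the crux --- and it defers the detailed verification of that point to Lemma~3.9 of~\cite{RSh18b}, so your sketch is at the same level of completeness as the paper's. One small caution: for $L = {\bf QK}$ you should not take the reflexive--symmetric closure of the \emph{whole} relation (only the attached chains, together with the edges joining them to their host worlds, need to be reflexive and symmetric), since closing $R$ on $W$ itself would change the truth values of $\Box$-subformulas of $\vp'$ in a non-symmetric ${\bf QK}$-model and break the induction.
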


\begin{proof}
  Analogous to the proof of Lemma~\ref{lem:vp-ast}, with the
  observation that the truth status of formulas $\alpha_k$ is not
  changed at the worlds of the models $\mmodel{M}_k$ once they get
  attached to the model $\frak{M}$ satisfying the formula
  $B \con \vp'$ to obtain the model $\frak{M}^\ast$ satisfying the
  formula $\forall x\, \beta_{n+1} (x) \con \vp^\ast$, even though
  their roots can now see the worlds of $\frak{M}$ due to the symmetry
  of the accessibility relation of $\frak{M}^\ast$. For a detailed
  argument showing that the truth status of formulas $\alpha_k$ in
  $\frak{M}^\ast$ at worlds from $\mmodel{M}_k$ is not affected, we
  refer the reader to the proof of Lemma 3.9 in~\cite{RSh18b}.
\end{proof}

Then, using an argument analogous to the one used in the proof of
Theorem~\ref{thr:Grz}, we obtain the following:

\begin{theorem}
  \label{thrm:QBT}
  Let $L$ be a logic such that
  ${\bf QK} \subseteq L\subseteq {\bf QKTB}$.  Then, $L$ is
  undecidable in the language with two individual variables and a
  single monadic predicate letter.
\end{theorem}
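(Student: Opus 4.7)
The plan is to follow the structure of the proof of Theorem~\ref{thr:Grz}, now using Lemma~\ref{lem:vp-ast-QKTB} in place of Lemma~\ref{lem:vp-ast}. As before, given a closed formula $\vp$ with two individual variables and only monadic predicate letters $P_1, \ldots, P_n$, the embedding is $e(\vp) = \forall x\, \beta_{n+1}(x) \con \vp^\ast$, where $\vp^\ast$ is constructed using the formulas $\alpha_k$ and $\beta_k$ of this subsection.

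First, I would establish, following the preamble to Theorem~\ref{thr:Grz}, that the tiling construction of~\cite{KKZ05} yields an undecidable fragment $F = \{\neg \xi_T : T \text{ is a ``bad'' instance of the tiling problem}\}$ of formulas with only monadic predicate letters and two individual variables, which belongs to every logic in $[{\bf QK}, {\bf QKTB}]$. For this, it suffices to exhibit a {\bf QKTB}-frame satisfying the structural requirement from~\cite{KKZ05}---a world seeing infinitely many worlds each of which sees infinitely many further worlds---on which $\xi_T$ is satisfiable whenever $T$ tiles $\nat \times \nat$; such a frame is obtained by taking the reflexive and symmetric closure of the tree used in~\cite{KKZ05}.

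Second, I would verify a straightforward analogue of Lemma~\ref{lem:vp-B} for {\bf QKTB}: since reflexivity and symmetry are preserved under taking submodels, the argument used in Lemma~\ref{lem:vp-B} applies unchanged, so $\vp$ is {\bf QKTB}-satisfiable iff $B \con \vp'$ is, with $B = \forall x\, P_{n+1}(x)$. Combined with Lemma~\ref{lem:vp-ast-QKTB}, this gives that, for $L \in \{{\bf QK}, {\bf QKTB}\}$, we have $\neg \xi_T \in L$ iff $\neg e(\xi_T) \in L$. Hence, for any logic $L$ with ${\bf QK} \subseteq L \subseteq {\bf QKTB}$: if $T$ does not tile $\nat \times \nat$, then $\neg \xi_T \in {\bf QK}$ and so $\neg e(\xi_T) \in {\bf QK} \subseteq L$; if $T$ tiles $\nat \times \nat$, then $\neg \xi_T \notin {\bf QKTB}$ and so $\neg e(\xi_T) \notin L$. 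This yields a reduction of the undecidable tiling problem to the two-variable, single-monadic-letter fragment of $L$.

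The main obstacle---already absorbed into Lemma~\ref{lem:vp-ast-QKTB} and its reference to Lemma 3.9 in~\cite{RSh18b}---is ensuring that attaching the auxiliary models $\mmodel{M}_k$ to a {\bf QKTB}-model $\mmodel{M}$ does not disturb the truth status of the formulas $\alpha_k$ at the worlds of $\mmodel{M}_k$, despite the fact that in the symmetric setting those worlds can now access back into $\mmodel{M}$. The careful choice of the modal depths in the nested boxes and diamonds in $\delta_i^k$ is precisely what rules out this kind of ``leakage'' and makes the whole reduction go through.
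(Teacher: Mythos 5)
Your proposal is correct and follows essentially the same route as the paper, which itself proves Theorem~\ref{thrm:QBT} simply by invoking "an argument analogous to the one used in the proof of Theorem~\ref{thr:Grz}," with Lemma~\ref{lem:vp-ast-QKTB} replacing Lemma~\ref{lem:vp-ast}. The two details you make explicit---that a reflexive and symmetric frame realises the structural condition from~\cite{KKZ05} so that the tiling fragment $F$ lies in every logic of $[{\bf QK},{\bf QKTB}]$, and that the Lemma~\ref{lem:vp-B} argument survives because reflexivity and symmetry are preserved under passing to the submodel on $\{w : \sat{M}{w}{B}\}$---are exactly what the paper leaves implicit in the phrase "we proceed as in the previous section."
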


\begin{corollary}
  \label{cor:KTB}
  {\bf QKB} and {\bf QKTB} are undecidable in the language with two
  individual variables and a single monadic predicate letter.
\end{corollary}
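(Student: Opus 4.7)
The plan is to observe that both \textbf{QKB} and \textbf{QKTB} lie inside the interval $[{\bf QK}, {\bf QKTB}]$ and then invoke Theorem~\ref{thrm:QBT} directly, so that essentially no work beyond verifying a chain of inclusions is required.

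The first step is to recall the definitions given earlier in the paper: for any propositional normal modal logic $L$, the quantified logic ${\bf Q}L$ is ${\bf QCl} \oplus L$, where $\oplus$ is the closure operator under predicate substitution, modus ponens, generalization, and necessitation. Since at the propositional level \textbf{KB} is obtained from \textbf{K} by adding the Brouwerian axiom $\vp \imp \Box \Diamond \vp$, and \textbf{KTB} by further adding the axiom T, we have ${\bf K} \subseteq {\bf KB} \subseteq {\bf KTB}$. The monotonicity of $\oplus$ in the set of added axioms (adding more axioms can only enlarge the resulting set of theorems) then yields ${\bf QK} \subseteq {\bf QKB} \subseteq {\bf QKTB}$. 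Equivalently, on the semantic side, every reflexive and symmetric frame is symmetric, and every symmetric frame is a frame; hence any formula valid on all frames is valid on symmetric ones, and any formula valid on all symmetric frames is valid on all reflexive symmetric frames, giving the inclusions in the same direction.

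The second step is to apply Theorem~\ref{thrm:QBT}, which states that every logic $L$ with ${\bf QK} \subseteq L \subseteq {\bf QKTB}$ is undecidable in the language with two individual variables and a single monadic predicate letter. Taking $L = {\bf QKB}$ and $L = {\bf QKTB}$ in turn, the corollary follows at once. There is no real obstacle: the argument is a one-line specialization of Theorem~\ref{thrm:QBT}, and the only thing to check is the trivial fact that \textbf{QKB} sits between \textbf{QK} and \textbf{QKTB}.
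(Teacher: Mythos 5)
Your proposal is correct and matches the paper's (implicit) argument exactly: the corollary is stated as an immediate consequence of Theorem~\ref{thrm:QBT}, relying on precisely the inclusions ${\bf QK} \subseteq {\bf QKB} \subseteq {\bf QKTB}$ that you verify. Nothing further is needed.
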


\section{Intuitionistic and related logics}
\label{sec:int}

We now consider logics closely related to the quantified
intuitionistic logic {\bf QInt}.

\subsection{Syntax and semantics}
\label{sec:int-syntax-semantics}

The (first-order) quantified intuitionistic language contains
countably many individual variables; countably many predicate letters
of every arity; propositional constants $\bot$ (``falsehood'') and
$\top$ (``truth''); propositional connectives $\wedge$, $\vee$, and
$\to$; and quantifiers $\exists$ and $\forall$.  Formulas are defined
in the usual way; when parentheses are left out, $\wedge$ and $\vee$
are understood to bind tighter than $\imp$.  We also use the following
abbreviations: $\Box \vp = \truth \imp \vp$, $\Box^0 \vp = \vp$, and
$\Box^{n+1} \vp = \Box \Box^n \vp$.

A Kripke frame is a tuple $\frak{F} = \langle W,R\rangle$, where $W$
is a non-empty set (of worlds) and $R$ is a binary (accessibility)
relation on $W$ that is reflexive, anti-symmetric, and transitive.

A Kripke model $\frak{M} = \langle W,R,D,I\rangle$ is defined as in
the modal case, except that the interpretation function $I$ satisfies
the additional condition that $wRw'$ implies
$I(w,P) \subseteq I(w',P)$. An assignment is defined as in the modal
case.



The truth of a formula $\varphi$ in a world $w$ of a model $\frak{M}$
under an assignment $g$ is inductively defined as follows:

\begin{itemize}
\item
  $\frak{M},w\not\models^g \bot$;
\item
  $\frak{M},w \models^g \top$;
\item
$\frak{M},w\models^g P(x_1,\ldots,x_n)$ if $\langle g(x_1),\ldots,g(x_n)\rangle \in I(w,P)$;
\item $\frak{M},w\models^g\varphi_1\wedge\varphi_2$ if
  $\frak{M},w\models^g\varphi_1$ and $\frak{M},w\models^g\varphi_2$;
\item
$\frak{M},w\models^g\varphi_1\vee\varphi_2$ if
$\frak{M},w\models^g\varphi_1$ or $\frak{M},w\models^g\varphi_2$;
\item $\frak{M},w\models^g\varphi_1\to\varphi_2$ if $wRw'$ and
  $\frak{M},w'\models^g\varphi_1$ imply
  $\frak{M},w'\models^g\varphi_2$, for every $w'\in W$;
\item $\frak{M},w\models^g\exists x\,\varphi_1$ if
  $\frak{M},w\models^{g'}\varphi_1$, for some assignment $g'$ that
  differs from $g$ at most in the value of $x$ and such that
  $g'(x)\in D(w)$;
\item $\frak{M},w\models^g\forall x\,\varphi_1$ if
  $\frak{M},w'\models^{g'}\varphi_1$, for every $w'\in W$ such that
  $wRw'$ and every assignment $g'$ such that $g'$ differs from $g$ in
  at most the value of $x$ and such that $g'(x)\in D(w')$.
\end{itemize}

Truth in models, frames, and classes of frames is defined as in the
modal case.  {\bf QInt} is the set of formulas true in all frames.

We also consider some logics closely related to {\bf QInt}.  First,
{\bf QKC} is the quantified counterpart of the propositional logic
${\bf KC} = {\bf Int} + \neg p \dis \neg \neg p$.  Semantically, {\bf
  QKC} is characterized by the frames that satisfy the (convergence)
condition that $w R v_1$ and $w R v_2$ imply the existence of a world
$u$ such that $v_1 R u$ and $v_2 R u$.

Second, we consider quantified counterparts of Visser's basic
propositional logic {\bf BPL} and formal propositional logic {\bf
  FPL}~\cite{Visser81}: {\bf BPL} and {\bf FPL} are logics in the
intuitionistic language whose modal companions are {\bf K4} and {\bf
  GL}---that is, given the G\"{o}del's translation $t$ of the
intuitionistic language into the modal one (see, for example,
\cite{ChZ}, \S\ 3.9), ${\bf BPL} = t^{-1} ({\bf K4})$ and
${\bf FPL} = t^{-1} ({\bf GL})$.  Therefore, we define their
quantified counterparts as logics ${\bf QBL} = T^{-1} ({\bf QK4})$ and
${\bf QFL} = T^{-1} ({\bf QGL})$, where $T$ is the extension of $t$
with the following clauses:
$T( \exists x\, \vp ) = \exists x\, T(\vp)$; and
$T( \forall x_1 \ldots \forall x_n\, \vp ) = \Box \forall x_1 \ldots
\forall x_n\, T(\vp)$,
where $\vp$ does not begin with a universal quantifier.
To give the semantic account of {\bf QBL} and {\bf QFL}, we use Kripke
frames and models as defined for {\bf QInt}, except that the
accessibility relation is now only required to be anti-symmetric and
transitive.
The relation $\frak{M},w\models^g\vp$ is defined as in the
intuitionistic case, with the following modification for the universal
quantifiers:
\begin{itemize}
\item $\frak{M},w\models^g\forall x_1\ldots\forall x_n\,\varphi_1$,
  where $\varphi_1$ does not begin with a universal quantifier, if
  $\frak{M},w'\models^{g'}\varphi_1$, for every $w'\in W$ such that
  $wRw'$ and every assignment $g'$ such that $g'$ differs from $g$ in
  at most the values of $x_1,\ldots,x_n$ and such that
  $g'(x_1),\ldots,g'(x_n)\in D(w')$.
\end{itemize}
This clause is required to make, in the absence of reflexivity of the
accessibility relation, the formula $\forall x \forall y\, \vp$
equivalent to the formula $\forall y \forall x\, \vp$.  Then,
${\bf QBL}$ is sound (and complete) with respect to all such frames,
while ${\bf QFL}$ is sound (but not complete) with respect to the
subclass where the converse of the accessibility relation is
well-founded (i.\,e., with respect to the frames of the logic {\bf
  GL}).  For technical reasons, namely to avoid being distracted with
Kripke-completeness, we define the logic ${\bf QFL}^{sem}$ as the set
of formulas valid in all frames where the converse of the
accessibility relation is well-founded; all that matters to us is that
${\bf QFL} \subseteq {\bf QFL}^{sem}$.

\subsection{Undecidability results}
\label{sec:undecidability-int}

We now address the question, raised in~\cite{KKZ05}, of whether it
suffices to use only monadic predicate letters to obtain
undecidability of the two-variable fragment {\bf QInt}(2) of {\bf
  QInt}.  We show that, in fact, it suffices to use a \emph{single}
monadic predicate letter to obtain undecidability of {\bf QInt}(2).
We do so by suitably adapting the technique used in~\cite{Rybakov08}
to (polynomially) reduce satisfiability in propositional
intuitionistic logic ${\bf Int}$ to satisfiability in the fragment of
${\bf Int}$ with only two propositional variables.  As the technique
from~\cite{Rybakov08} requires that we work with positive formulas, we
first show that the {\it{positive}} monadic fragment of {\bf QInt}(2)
is undecidable.  We note here that transitioning from the
propositional language to the first-order one, we ``strengthen'' the
result from~\cite{Rybakov08} in the following sense: while in the
propositional case, (the positive fragment of) ${\bf Int}$ is
polynomially reducible to its two-variable subfragment, in the the
first-order case, we (polynomially) reduce (the positive fragment of)
{\bf QInt}(2) to its subfragment containing a single predicate
letter.\footnote{In light of~\cite{Nishimura60}, the reduction of
  ${\bf Int}$ to its single-variable fragment would imply that the
  complexity classes {\bf P} and {\bf PSPACE} are equivalent.}
Working with the positive fragment of {\bf QInt} also allows us to
extend our results to the interval $[{\bf QInt}, {\bf QKC}]$, as all
logics in this interval share the positive fragment.  Moreover, a
modification of this construction allows us to obtain analogous
results for logics in $[{\bf QBL}, {\bf QFL}]$.

It is proven in~\cite{KKZ05} that {\bf QInt}(2) is undecidable by
reducing the following undecidable tiling problem~\cite{Berger66} to
the complement of {\bf QInt}(2): given a finite set $T$ of tile types
that are tuples of colours
$t = \langle left(t), right(t), up(t), down(t) \rangle$, decide
whether $T$ tiles the grid $\nat \times \nat$ in the sense that there
exists a function $\tau : \nat \times \nat \to T$ such that, for every
$i, j \in \nat$, we have $up(\tau(i,j)) = down(\tau(i,j+1))$ and
$right(\tau(i,j)) = left(\tau(i+1,j))$. The results in this section
build on this proof.

We start off by proving that the positive fragment of {\bf QInt}(2)
containing two binary and an unlimited number of monadic predicate
letters, as well as two propositional variables, is undecidable. This
is achieved by eliminating the constant $\bot$ from the formulas used
in the proof of undecidability of {\bf QInt}(2) from~\cite{KKZ05}.
For most formulas from~\cite{KKZ05}, all we do is replace $\bot$ with
a propositional variable $q$.  The resultant formulas are listed below
for the reader's convenience; for ease of reference, we preserve the
numbering from~\cite{KKZ05}:
\begin{equation}
  \label{eq:1}
  \forall x\, \bigvee_{t \in T} ( P_t (x ) \con \bigwedge_{t' \ne t}
  (P_{t'} (x) \imp q)),
\end{equation}
\begin{equation}
  \label{eq:2}
  \bigwedge_{right(t) \ne left(t')} \forall x\, \forall y\, ( H(x,y) \con P_t
  (x) \con P_{t'} (y) \imp q),
\end{equation}
\begin{equation}
  \label{eq:3}
  \bigwedge_{up(t) \ne down(t')} \forall x\, \forall y\, ( V(x,y) \con P_t
  (x) \con P_{t'} (y) \imp q),
\end{equation}
\begin{equation}
  \label{eq:4}
  \forall x\, \exists y\, H (x, y) \con \forall x\, \exists y\, V (x,
  y),
\end{equation}
\begin{equation}
  \label{eq:5}
  \forall x\, \forall y\, ( V(x, y) \dis (V(x, y) \imp q)),
\end{equation}
\begin{equation}
  \label{eq:6}
  \forall x\, \forall y\, [ V(x,y) \con \exists x\, ( D(x) \con H( y,
  x) ) \imp \forall y\, ( H(x,y) \imp \forall x\, (D(x) \imp V( y, x )
  ))].
\end{equation}

\noindent Let $\psi^{+}_T$ be the conjunction of formulas (1) through (6).
Then, define
$$
\vp^{+}_T = \psi^{+}_T \imp ((\exists x\, (D(x) \imp q) \imp p) \imp
p),
$$
where $p$ is a propositional variable distinct from $q$.

\begin{lemma}
  $\vp^{+}_T \notin {\bf QInt}(2)$ if, and only if, $T$ tiles
  $\nat \times \nat$.
\end{lemma}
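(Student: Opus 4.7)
The formula $\vp^{+}_T$ is the positive-fragment analogue of the formula used in the proof of undecidability of \textbf{QInt}(2) in~\cite{KKZ05}: the conjunct $\psi^{+}_T$ is obtained from its counterpart in \cite{KKZ05} by replacing every occurrence of $\bot$ with the propositional variable $q$, and the outer double negation of $\exists x\,\neg D(x)$ is encoded as $(\exists x\,(D(x)\imp q)\imp p)\imp p$ using a further propositional variable $p$. The plan is to prove both directions by adapting the KKZ05 argument, using $q$ as a ``local falsehood'' inside $\psi^{+}_T$ and the Friedman-style envelope with $p$ in place of the outer double negation.

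For the direction ``$T$ tiles $\nat\times\nat$ implies $\vp^{+}_T\notin\textbf{QInt}(2)$'', given a tiling $\tau$ I would construct a two-world Kripke model with root $w_0$ and reflexive successor $w_1$, with constant domain $\nat\times\nat$. At $w_0$, interpret $P_t,H,V$ from $\tau$ exactly as in~\cite{KKZ05}, let $D$ be empty, and let $p$ and $q$ be false; at $w_1$, add $D(i,j)$ for every $(i,j)$ and make $p$ true, keeping $q$ false. Persistence is automatic. A routine verification then shows that each of (1)--(6) holds at $w_0$: since $q$ is nowhere true, each subformula of the form $\alpha\imp q$ expresses ``$\alpha$ is false everywhere above'', so the conjuncts behave exactly as their $\bot$-based originals under $\tau$. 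The formula $\exists x\,(D(x)\imp q)$ fails at $w_0$ (at $w_1$, $D$ holds everywhere while $q$ does not), so $(\exists x\,(D(x)\imp q))\imp p$ holds vacuously at $w_0$, while $p$ is false at $w_0$; hence the outer implication, and with it $\vp^{+}_T$, fails at $w_0$.

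For the converse direction, suppose $\vp^{+}_T$ is refuted at some world $w_0$. Then $\psi^{+}_T$ holds at $w_0$ and there is a world $w_1\geq w_0$ at which $\exists x\,(D(x)\imp q)\imp p$ is true and $p$ is false; by persistence, $\psi^{+}_T$ also holds at $w_1$. A key observation is that $q$ must be false at $w_1$: otherwise $D(x)\imp q$ would hold for every $x\in D(w_1)$, forcing $\exists x\,(D(x)\imp q)$ and hence $p$ to be true at $w_1$, a contradiction. The same argument applies at every $w\geq w_1$ at which $p$ remains false, so at all such worlds $q$ behaves as a genuine falsehood. Using this, I would extract a tiling of $\nat\times\nat$ from the interpretations of $P_t,H,V$ in the subframe of $p$-false worlds above $w_1$ exactly as in~\cite{KKZ05}: conjunct (4) builds the grid, (1) assigns a unique tile to each grid point, (2)--(3) enforce colour-matching of horizontal and vertical neighbours, and (5)--(6) supply the confluence needed to align the two successor directions coherently.

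The main obstacle in the converse direction is ensuring that the grid extraction does not leak into worlds where $p$ (and hence $q$) has become true, since at such worlds the positive conjuncts (1)--(3) collapse to triviality and carry no tiling content. The cascading observation that $p$-false worlds above $w_1$ are automatically $q$-false, combined with the persistent truth of $\exists x\,(D(x)\imp q)\imp p$, confines the whole extraction to precisely the subframe on which $\psi^{+}_T$ behaves as it did in the $\bot$-based original, which is what is required to reconstruct a total tiling of $\nat\times\nat$.
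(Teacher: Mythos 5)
There is a genuine gap in the direction ``$T$ tiles $\nat\times\nat$ implies $\vp^{+}_T\notin{\bf QInt}(2)$'': your explicit two-world countermodel does not work. If $D$ is interpreted as the \emph{entire} domain at $w_1$ while $H$ and $V$ are the functional grid relations and $q$ is false at $w_1$, then conjunct (6) fails at $w_1$. Indeed, for any $a=(i,j)$ the antecedent $V(a,b)\con\exists x\,(D(x)\con H(b,x))$ is satisfied at $w_1$ with $b=(i,j+1)$ (its $H$-successor $(i+1,j+1)$ lies in $D$), but the consequent $\forall x\,(D(x)\imp V(y,x))$ then demands that \emph{every} element of the domain be a $V$-successor of $(i+1,j)$, which is false for a functional $V$. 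Hence $\psi^{+}_T$ fails at $w_1$ and therefore at $w_0$, and $\vp^{+}_T$ comes out \emph{vacuously true} in your model. This is exactly why the paper does not build a fresh model but reuses the frame and predicate interpretation of~\cite{KKZ05} verbatim, changing only $p$ and $q$: in that model $w_0$ has one successor $u_d$ for each domain element $d$, with $D=\{d\}$ at $u_d$, so that the conclusion of (6) only ever asserts the single correct $V$-edge, while the consequent of $\vp^{+}_T$ still fails at $w_0$ because for every $d$ there is \emph{some} world above $w_0$ where $D(d)$ holds and $q$ is false (this is also why $p$ must be made true at every world other than $w_0$, not just at a single top world).

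In the converse direction your argument is in the right spirit but stops one step short of the observation the paper isolates as the crux. You correctly show that $q$ is false at every $p$-false world above $w_1$, but what the extraction actually needs is: for every $d\in D(w_1)$ there exists $u$ with $w_1 R u$ such that $D(d)$ holds and $q$ fails at $u$. This follows from the falsity of $\exists x\,(D(x)\imp q)$ at $w_1$ (which you derive implicitly but never use): $D(d)\imp q$ fails at $w_1$ for each $d$, yielding such a $u$. Without this, conjunct (6) can never be instantiated with a nonempty $D$ at a world where the tiling content is still active, and the confluence of the grid cannot be recovered; note also that these witnessing worlds $u$ need not be $p$-false, so confining the extraction to the $p$-false subframe, as you propose, is not quite the right formulation.
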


\begin{proof}
  The proof is a minor modification of the proof of Theorem 1
  from~\cite{KKZ05}, with $q$ essentially playing the role that
  ``falsehood'' plays in~\cite{KKZ05}.

  For the left to right direction, we observe that, given a model
  $\mmodel{M}$ and a world $w$ such that $\notsat{M}{w}{\vp^{+}_T}$,
  as well as an arbitrary $d \in D(w)$, there exists a world $u$ in
  $\mmodel{M}$ with $w R u$ such that $\sat{M}{u}{D[d]}$ and
  $\notsat{M}{u}{q}$.  This is a straightforward consequence of the
  fact that
  $\notsat{M}{w}{(\exists x\, (D(x) \imp q) \imp p) \imp p}$.  Given
  this, the argument from~\cite{KKZ05} applies.

  For the other direction, the model falsifying $\vp^{+}_T$ is
  different from the one used in~\cite{KKZ05} only in the evaluation
  of $p$ and $q$.  Thus, we use the same frame and interpretation of
  predicate letters as in~\cite{KKZ05}, and additionally make $q$
  false at every world of the model and make $p$ false at $w_0$ and
  true at every other world.
\end{proof}

Since $\vp^{+}_T$ is a positive formula, this immediately gives us the
following:

\begin{corollary}
  \label{cor:QInt2-pos}
  The positive fragment of ${\bf QInt}$ with two individual variables,
  two binary predicate letters, an unlimited number of monadic
  predicate letters, and two propositional variables is undecidable.
\end{corollary}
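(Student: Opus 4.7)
The plan is short: Corollary~\ref{cor:QInt2-pos} is essentially a bookkeeping observation on top of the preceding Lemma. That Lemma already furnishes an effective reduction $T \mapsto \vp^{+}_T$ from Berger's undecidable tiling problem to the complement of ${\bf QInt}(2)$, so all that remains is to verify that each $\vp^{+}_T$ lies within the asserted syntactic fragment and conclude.

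I would proceed by inspecting $\vp^{+}_T$ conjunct-by-conjunct. For positivity, I would check that formulas (1)--(6) together with the outer shell $\psi^{+}_T \imp ((\exists x\, (D(x) \imp q) \imp p) \imp p)$ employ only $\con$, $\dis$, $\imp$, $\forall$, $\exists$, predicate letters, and propositional variables; neither $\bot$ nor $\neg$ occurs anywhere. The key point is that the role played by the constant $\bot$ in the analogous construction of Kontchakov, Kurucz, and Zakharyaschev is here taken over by the propositional variable $q$, which is precisely the device that makes the reduction land in the positive fragment.

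Next I would tally the resources. The only individual variables appearing in formulas (1)--(6) and in $\vp^{+}_T$ are $x$ and $y$. The binary predicate letters are exactly $H$ and $V$. The unary predicate letters are $D$ together with one $P_t$ for each $t \in T$, which is finite for any fixed instance but unbounded as $T$ ranges over instances of the tiling problem. The propositional variables are $p$ and $q$, explicitly required to be distinct.

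There is no genuine obstacle, since the real work has been performed by the Lemma; the subformula $(\exists x\, (D(x) \imp q) \imp p) \imp p$ functions as a positive, Kuroda-style substitute for double negation, which is what makes the trick go through while remaining $\bot$-free. Combining the effective map $T \mapsto \vp^{+}_T$ with the Lemma and the syntactic inspection above yields a reduction of Berger's problem to the positive $(x,y)$-fragment with the stated predicate and propositional alphabet, which is therefore undecidable.
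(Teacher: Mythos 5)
Your proposal is correct and matches the paper's own treatment: the corollary is stated as an immediate consequence of the preceding lemma, with the only additional content being the observation that $\vp^{+}_T$ is positive and uses only the variables $x,y$, the binary letters $H,V$, the unary letters $D$ and $P_t$ ($t\in T$), and the propositional variables $p,q$. Your conjunct-by-conjunct inspection is exactly the bookkeeping the paper leaves implicit.
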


We now show how, drawing on an idea of Kripke's for modal
logics~\cite{Kripke62}, one can, in the positive fragment of {\bf
  QInt}, simulate binary predicate letters using monadic predicate
letters and propositional variables.  As this does not increase the
number of individual variables in a formula, it will allow us to
eliminate binary predicate letters from the formula $\vp^{+}_T$.

\begin{lemma}
  \label{lm:modelling-binary-predicates}
  Let $\chi$ be a positive formula in ${\bf QInt}$ containing an
  occurrence of a binary predicate letter $Q$, and let $Q_1$ and $Q_2$
  be monadic predicate letters, and $r$ and $s$ be propositional
  variables, not occurring in $\chi$.  Let $\chi'$ be the result of
  uniformly replacing every subformula of $\chi$ of the form $Q(x, y)$
  with $(Q_1(x) \con Q_2(y) \imp r) \dis s$.  Then,
  $\chi \in {\bf QInt}$ if, and only if, $\chi' \in {\bf QInt}$.
\end{lemma}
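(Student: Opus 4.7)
For the forward direction, $\chi \in {\bf QInt}$ implies $\chi' \in {\bf QInt}$ by closure of ${\bf QInt}$ under predicate substitution applied to $Q(x,y) \mapsto (Q_1(x) \con Q_2(y) \imp r) \dis s$. The main work is the converse, which I plan to prove contrapositively: given a Kripke model $\mmodel{M} = \langle W, R, D, I \rangle$ and a world $w_0 \in W$ with $\mmodel{M}, w_0 \not\models \chi$, construct a Kripke model $\mmodel{M}'$ such that $\mmodel{M}', w_0 \not\models \chi'$.

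The plan is to build $\mmodel{M}'$ from $\mmodel{M}$ by adding, for each $w \in W$ and each pair $(a,b) \in D(w) \times D(w)$ with $(a,b) \notin I(w,Q)$, a fresh terminal world $v_{w,a,b}$ as a new immediate successor of $w$; formally, $W' = W \cup \{v_{w,a,b}\}$, $R'$ is the reflexive transitive closure of $R \cup \{(w, v_{w,a,b})\}$, and $D'(v_{w,a,b}) = D(w)$. The interpretation will send $Q_1, Q_2, r, s$ to the empty set at every original world $w \in W$, while preserving the interpretation of the remaining predicate letters from $I$; at $v_{w,a,b}$ we set $I'(v_{w,a,b}, Q_1) = \{a\}$ and $I'(v_{w,a,b}, Q_2) = \{b\}$, make $r$ false and $s$ true, and interpret every other predicate letter $P$ of arity $n$ as the full set $D(w)^n$. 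Monotonicity of $I'$ and the frame properties are immediate from the construction.

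The heart of the argument will be a pair of claims proved by simultaneous induction on subformulas $\psi$ of $\chi$: (i) $\mmodel{M}, w \models^g \psi$ iff $\mmodel{M}', w \models^g \psi'$, for every $w \in W$ and every assignment $g$; and (ii) $\mmodel{M}', v_{w,a,b} \models^g \psi'$, for every $v_{w,a,b}$ and every $g$. Claim (ii) will follow easily from the construction: non-$Q$ atomic formulas hold at $v_{w,a,b}$ because their interpretation contains all tuples, the simulated $Q'(x,y)$ holds because $s$ is true there, and the remaining cases go through inductively using that $v_{w,a,b}$ is terminal (so $\imp$ collapses to material implication and $\forall$ reduces to quantification over $D(w)$). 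Claim (i) will then follow by induction using (ii) to handle $v$-world successors in the $\imp$ and $\forall$ cases; the Boolean and $\exists$ cases are routine.

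The main obstacle will be the atomic case $\psi = Q(x,y)$ of claim (i). With $g(x) = c$ and $g(y) = d$, since $s$ is false at $w$, the formula $Q'(x,y)$ holds at $w$ iff the implication $Q_1(x) \con Q_2(y) \imp r$ holds at $w$. Because $Q_1$ and $Q_2$ are empty at every $w' \in W$ above $w$, this implication can be violated only at some $v$-successor, and at $v_{w',a,b}$ the antecedent $Q_1(c) \con Q_2(d)$ holds exactly when $(a,b) = (c,d)$, in which case $r$ is false. Hence $Q'(x,y)$ holds at $w$ iff no $v_{w',c,d}$ exists for any $w' \geq w$, i.e., iff $(c,d) \in I(w',Q)$ for every $w' \geq w$. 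By persistence of $I$ on $Q$ in $\mmodel{M}$, this is equivalent to $(c,d) \in I(w,Q)$, i.e., to $\mmodel{M}, w \models^g Q(x,y)$. This yields the atomic case, from which the rest of the inductive argument and the whole proof follow.
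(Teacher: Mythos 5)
Your proposal is correct and follows essentially the same route as the paper's proof: the same auxiliary worlds $w_{a,b}$ (one for each failure of $Q$ at each world), the same interpretation of $Q_1$, $Q_2$, $r$, $s$ there, and the same induction in which the key point is that every translated subformula is true at the added worlds (which is where positivity is used), so that failing successors in the $\imp$ and $\forall$ cases must lie in the original model. The only difference is presentational: you make explicit a few choices the paper leaves implicit, such as emptiness of $Q_1, Q_2, r, s$ at the original worlds and the separate statement of the claim about the added worlds.
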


\begin{proof}
  The left-to-right direction follows from the closure of {\bf QInt}
  under substitution.  For the other direction, assume that there
  exist $\mmodel{M} = \langle W, R, D, I \rangle$ and $w_0 \in W$ such
  that $\notsat{M}{w_0}{\chi}$. We modify $\mmodel{M}$ to obtain a
  model $\mmodel{M}'$ falsifying $\chi'$ as follows.  For every
  $w \in W$ and every $a, b \in D(w)$ such that
  $\notsat{M}{w}{Q[a, b]}$, add to $W$ a world $w_{a, b}$ with
  $w R' w_{a, b}$ and let
  $$
  \begin{array}{lll}
    \mmodel{M}', w_{a, b} \not\models r; &  &   \\
    \mmodel{M}', w_{a, b} \models s; & &   \\
    \mmodel{M}', w_{a, b} \models Q_1 [d] & \leftrightharpoons & d
                                                                     = a; \\
    \mmodel{M}', w_{a, b} \models Q_2 [d] & \leftrightharpoons & d
    = b;
  \end{array}
  $$
  and let all the predicate letters different from $Q_1$ and $Q_2$ and
  occurring in $\chi'$ be universally true at every such world;
  likewise for propositional variables different from $r$ and
  $s$. Also, let $\mmodel{M}', w \not\models s$.

  Then we can show that $\sat{M}{w}{\theta}[a_1, \ldots, a_m]$ if, and
  only if, $\sat{M'}{w}{\theta'}[a_1, \ldots, a_m]$, for every
  subformula $\theta$ of $\chi$, every $w \in W$, and every
  $a_1, \ldots, a_m \in D(w)$, where $\theta'$ is the result of
  substituting in $\theta$ every occurrence of $Q(x, y)$ with
  $(Q_1(x) \con Q_2(y) \imp r) \dis s$. The proof is by induction on
  $\theta$.

  For the base case, first note that if \notsat{M}{w}{Q[a, b]}, then
  the presence in $\mmodel{M}'$ of the world $w_{a,b}$ guarantees that
  $\notsat{M'}{w}{(Q_1[a] \con Q_2[b] \imp r) \dis s}$; on the other
  hand, if \sat{M}{w}{Q[a, b]}, then
  $\sat{M'}{w}{(Q_1[a] \con Q_2[b] \imp r) \dis s}$, as
  \notsat{M}{u}{Q_1[a]} or \notsat{M}{u}{Q_2[b]}, for every $u$ with
  $w R' u$.

  The cases for $\theta = \theta_1 \dis \theta_2$,
  $\theta = \theta_1 \con \theta_2$, and $\theta = \exists x\, \theta_1$
  are straightforward.

  Let $\theta = \theta_1 \imp \theta_2$.  Assume that
  $\notsat{M'}{w}{\theta' [a_1, \ldots, a_m]}$.  Then,
  \sat{M'}{u}{\theta'_1[a_1, \ldots, a_m]} and
  \notsat{M'}{u}{\theta'_2[a_1, \ldots, a_m]}, for some $u \in W'$
  with $w R' u$. If we could apply the inductive hypothesis to $u$, we
  would be done.  To see that we can, notice that $\theta'_2$ is built
  out of atomic formulas and the formula
  $(Q_1(x) \con Q_2(y) \imp r) \dis s$, all of which are true under
  every assignment in every $w' \in W' - W$, using only $\con$,
  $\dis$, $\imp$, $\exists$, and $\forall$.  Therefore, $\theta'_2$ is
  true in every $w' \in W' - W$ under every assignment; hence,
  $u \in W$ and the inductive hypothesis is, therefore, applicable.
  Thus, \notsat{M}{w}{\theta[a_1, \ldots, a_m]}.  The other
  direction is straightforward.


  The case $\theta = \forall x\, \theta_1$ is similarly argued.
\end{proof}

Now, let $\xi^{+}_T$ be the result of replacing in $\vp^{+}_T$ of
$$
\begin{array}{rcl}
H( x, y ) & \mbox{with} & (H_1(x) \con H_2(y) \imp r_1) \dis s_1; \\
V( x, y ) & \mbox{with} & (V_1(x) \hfill \con \hfill V_2(y) \hfill
                          \imp r_2) \dis s_2.
\end{array}
$$
In view of Lemma~\ref{lm:modelling-binary-predicates},
$\xi^{+}_T \not\in {\bf QInt}$(2) if, and only if, $T$ tiles
$\nat \times \nat$.  As we can replace in $\xi^{+}_T$ a propositinal
variable such as $q$ with, say, $\exists x\, Q(x)$, we immediately
obtain the following:

\begin{theorem}
  \label{thr:int-2-positive}
  The positive monadic fragment of ${\bf QInt}$ with two individual
  variables is undecidable.
\end{theorem}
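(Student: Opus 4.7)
The plan is to reduce Corollary 3.1 (undecidability of the positive two-variable fragment with monadic letters and propositional variables) to the purely monadic case by simulating each propositional variable with an existential statement about a fresh monadic predicate letter, using the fact that $\exists x\, Q(x)$ is a persistent formula behaving just like a propositional variable in intuitionistic Kripke semantics.

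Concretely, I would start from the formula $\xi^{+}_T$ constructed just before the statement. It is positive, contains only two individual variables and only monadic predicate letters, but still contains the propositional variables $p$, $q$, $r_1$, $r_2$, $s_1$, $s_2$. For each propositional variable $v$ in this finite list, pick a fresh monadic predicate letter $Q_v$ and consider the substitution that replaces every occurrence of $v$ by the formula $\exists x\, Q_v(x)$. Let $\zeta_T$ be the result of applying this substitution to $\xi^{+}_T$. Then $\zeta_T$ is positive, contains only monadic predicate letters, and still involves only the two individual variables $x$ and $y$ (we reuse $x$ inside the existential, since the outer occurrences of $v$ are not in the scope of $\forall x$ or $\exists x$ in a way that would conflict, and renaming bound variables is always available within a two-variable language as the resulting $\exists x$ can be placed so that it does not bind free variables of its context).

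The crux of the argument is the equivalence $\xi^{+}_T \in {\bf QInt} \iff \zeta_T \in {\bf QInt}$. The left-to-right direction is immediate from closure of ${\bf QInt}$ under (predicate) substitution. For the other direction, I would take a countermodel $\mmodel{M} = \langle W, R, D, I \rangle$ with $\notsat{M}{w_0}{\xi^{+}_T}$ and expand $I$ on the fresh letters $Q_v$ by putting, for each $w \in W$, some fixed element $a_w \in D(w)$ into $I(w, Q_v)$ precisely when $v$ is true at $w$ in $\mmodel{M}$, and then closing upward under $R$ to enforce persistence (this is possible because if $v$ is true at $w$ in $\mmodel{M}$, it is true at every $R$-successor, so choosing any element of $D(w)$—which remains in $D(w')$ by the expanding-domain condition—works uniformly). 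The resulting model validates $\exists x\, Q_v(x)$ at exactly those worlds where $v$ was originally true, so a routine induction on subformulas shows that $\zeta_T$ fails at $w_0$ in the expanded model.

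The main obstacle is ensuring the substitution does not secretly introduce a third variable or otherwise break the two-variable restriction: every occurrence of a propositional variable in $\xi^{+}_T$ is replaced by a closed formula $\exists x\, Q_v(x)$, which uses only the single bound variable $x$ already present. Since free variables of the context are untouched and $\alpha$-renaming is harmless, the two-variable bound is preserved. Combined with Corollary 3.1, we conclude that $\zeta_T \notin {\bf QInt}$ iff $T$ tiles $\nat \times \nat$, yielding an effective reduction of the complement of the tiling problem into the positive monadic two-variable fragment of ${\bf QInt}$, which therefore is undecidable.
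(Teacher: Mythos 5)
Your proposal is correct and takes essentially the same route as the paper: the paper's proof of this theorem consists precisely of taking $\xi^{+}_T$ (obtained via Lemma~\ref{lm:modelling-binary-predicates}) and replacing each propositional variable by a sentence of the form $\exists x\, Q(x)$ for a fresh monadic letter. Your write-up merely supplies the model-theoretic details (the upward-closed interpretation of the fresh letters exploiting persistence) that the paper leaves to the reader.
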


We now embed the positive monadic fragment of {\bf QInt}(2) into its
subfragment containing formulas with only one monadic predicate
letter, suitably adapting the technique from~\cite{Rybakov08}.  As
this embedding does not introduce any fresh variables, our main result
in this section immediately follows.

  We begin by defining the frame $\frak{F} = \langle W, R \rangle$ to
  be used in the construction of a refuting countermodel.  The frame
  $\frak{F}$, depicted in Figure~\ref{fig:F-2}, is made up of levels
  of worlds.  The top-most, unnumbered, level comprises $d_1$, $d_2$,
  $d'_2$ and $d_3$; level $0$ comprises $a^0_1$, $a^0_2$, $b^0_1$ and
  $b^0_2$; level $1$ comprises $a^1_1$, $a^1_2$, $a^1_3$, $b^1_1$,
  $b^1_2$ and $b^1_3$; the accessibility relation between these worlds
  is depicted by arrows (the arrows that can be inferred by
  reflexivity and transitivity are omitted).  The other levels are
  defined recursively.

  For each $k \geqslant 2$, level $k$ contains worlds $a^k_l$ and
  $b^k_l$, for every $l \in \{1, \ldots, s_k \}$, where $s_k$ is
  defined by recursion: $s_1 = 3$; $s_{k+1} = (s_k - 1)^2$.  To define
  instances of the accessibility relation between worlds of level
  $k+1$ and worlds of level of $k$, for each $k \geqslant 1$, take the
  lexicographic ordering of pairs $\langle i, j \rangle$, where
  $i, j \in \{2, \ldots, s_k\}$, and, provided $\langle i, j \rangle$
  is the $m$th pair in this ordering, put
  $$
  \begin{array}{lll}
    a^{k+1}_{m} R\, b^k_1,
    & a^{k+1}_{m} R\, a^k_i,
    & a^{k+1}_{m} R\, b^k_j,
      \smallskip\\
    b^{k+1}_{m} R\, a^k_1,
    & b^{k+1}_{m} R\, a^k_i,
    & b^{k+1}_{m} R\, b^k_j.
  \end{array}
  $$

  Let $\frak{N}_a = \langle W, R, D, I \rangle$ be an intuitionistic
  Kripke model with a constant domain $\mathcal{A}$ containing element
  $a$; we assume that $\mathcal{A}$ contains at least three
  elements---as we shall see, this assumption does not lead to a loss
  of generality.  We say that $\frak{N}_a$ is {\em $a$-suitable} if,
  for some $a' \in \mathcal{A} - \{a\}$,
  \begin{itemize}
  \item $I(d_2, P) = \mathcal{A} - \{a\}$;
  \item $I(d'_2, P) = \{a'\}$;
  \item $I(d_3, P) = \{a, a'\}$;
  \item $I(b_1^0, P) = \{a'\}$;
  \item $I(w, P) = \varnothing$, for every
    $w \in W - \{ d_2^{\phantom{1}}, d'_2,
    d_3^{\phantom{1}}, b_1^0\}$.
  \end{itemize}

  \begin{figure}
\centering
\begin{picture}(180,210)
\put(50,170){$\circ$}
\put(80,170){$\circ$}
\put(110,170){$\circ$}
\put(140,170){$\circ$}

\put(52.5,144){\vector(0,1){26.5}}
\put(112.5,144){\vector(0,1){26.5}}
\put(54,144){\vector(1,1){27}}
\put(111,144){\vector(-1,1){27}}
\put(84,144){\vector(1,1){27}}
\put(81,144){\vector(-1,1){27}}
\put(141,144){\vector(-1,1){27}}
\put(140.5,143.5){\vector(-2,1){56}}
\put(140.7,143.2){\vector(-3,1){86}}

\put(84.5,144){\vector(2,1){56}}
\put(114.5,144.5){\vector(1,1){27}}
\put(142.5,145){\vector(0,1){25}}

\put(38,172){$d_1$}
\put(67,172){$d_3$}
\put(118,172){$d_2$}
\put(148,172){$d'_2$}

\put(50,140){$\circ$}
\put(80,140){$\circ$}
\put(110,140){$\circ$}
\put(140,140){$\circ$}

\put(20,110){$\circ$}

\put(50,110){$\circ$}
\put(80,110){$\circ$}
\put(110,110){$\circ$}
\put(140,110){$\circ$}
\put(170,110){$\circ$}

\put(20,60){$\circ$}
\put(50,60){$\circ$}
\put(170,60){$\circ$}
\put(140,60){$\circ$}
\put(32,62){$\dots$}
\put(62,62){$\dots$}
\put(122,62){$\dots$}
\put(152,62){$\dots$}

\put(50,30){$\circ$}
\put(140,30){$\circ$}
\put(32,32){$\dots$}
\put(62,32){$\dots$}
\put(122,32){$\dots$}
\put(152,32){$\dots$}

\put(92,7){$\dots$}
\put(92,82){$\dots$}

\put(52.5,34){\vector(0,1){26.5}}
\put(142.5,34){\vector(0,1){26.5}}

\put(52.5,114){\vector(0,1){26.5}}
\put(112.5,114){\vector(0,1){26.5}}
\put(142.5,114){\vector(0,1){26.5}}

\put(51,33.5){\vector(-1,1){27}}
\put(144,33.5){\vector(1,1){27}}
\put(54,33.2){\vector(3,1){86}}
\put(140.7,33.2){\vector(-3,1){86}}

\put(24,114){\vector(1,1){27}}
\put(171,114){\vector(-1,1){27}}
\put(81,114){\vector(-1,1){27}}
\put(111,114){\vector(-1,1){27}}

\put(24.5,113.5){\vector(2,1){56}}
\put(54.5,113.5){\vector(2,1){56}}
\put(84.5,113.5){\vector(2,1){56}}

\put(170.5,113.5){\vector(-2,1){56}}
\put(140.5,113.5){\vector(-2,1){56}}

\put(38,142){$a_1^0$}
\put(64,142){$a_2^0$}
\put(96,142){$b_1^0$}
\put(148,142){$b_2^0$}


\put(25,105){$b_3^1$}
\put(55,105){$b_2^1$}
\put(85,105){$b_1^1$}
\put(115,105){$a_3^1$}
\put(145,105){$a_2^1$}
\put(175,105){$a_1^1$}

\put(25,68){$a_1^k$}
\put(55,68){$a_i^k$}
\put(145,68){$b_j^k$}
\put(175,68){$b_1^k$}

\put(55,22){$b_m^{k+1}$}
\put(145,22){$a_m^{k+1}$}

\end{picture}

\caption{Frame $\frak{F}$}
\label{fig:F-2}
\end{figure}

  We now define formulas of one free variable, $x$, so that each so
  defined formula $\psi(x)$ is associated with a world of an
  $a$-suitable model based on $\frak{F}$ (or a frame isomorphic to
  $\frak{F}$), in the following sense: for every $w \in W$, the
  relation $w \not\models \psi(a)$ holds if, and only if, $w$ sees the
  world associated with $\psi$.  For these formulas, we use notation
  making it clear which worlds they correspond to: the formula $D_i$,
  for $i \in \{1, 3\}$, corresponds to ${d}_i^{\phantom{i}}$, the
  formula $D_2$ corresponds to $d_2$ and $d'_2$, the formula $A^k_i$
  to ${a}^k_i$, and the formula $B^k_i$ to $b^k_i$.

  First, we define formulas associated with the worlds of the three
  top-most levels:
  $$
  \begin{array}{lclclcl}
    D_1 & = & \exists x\, P(x); & A_1^1(x) & = & A_1^0(x) \con A_2^0(x) \imp B_1^0(x) \dis B_2^0(x);   \\
    D_2(x) & = & \exists x\, P(x) \imp P(x); &  A_2^1(x) & = & A_1^0(x) \con
                                                             B_1^0(x) \imp
                                                             A_2^0(x) \dis
                                                             B_2^0(x); \\
    D_3(x) & = & P(x) \imp \forall x\, P(x); & A_3^1(x) & =
                                               & A_1^0(x) \con B_2^0(x)
                                                 \imp A_2^0(x) \dis
                                                 B_1^0(x);  \\
    A_1^0(x) & = & D_2(x) \imp D_1 \dis D_3(x); & B_1^1(x)& = & A_2^0(x) \con B_1^0(x) \imp A_1^0(x) \dis B_2^0(x); \\
    A_2^0(x) & = & D_3(x) \imp D_1 \dis D_2(x); & B_2^1(x)& = & A_2^0(x) \con B_2^0(x) \imp A_1^0(x) \dis B_1^0(x); \\
    B_1^0(x) & = & D_1 \imp D_2(x) \dis D_3(x); & B_3^1(x)& = & B_1^0(x) \con B_2^0(x) \imp A_1^0(x) \dis A_2^0(x). \\
    B_2^0(x) & = & A_1^0(x) \con A_2^0(x) \con B_1^0(x) \imp & & & & \\
        & & \hfill D_1 \dis D_2(x) \dis D_3(x); & & & &
  \end{array}
  $$
  We proceed by recursion. Assume formulas associated with the worlds
  of level $k$, where $k \geqslant 1$, have been defined.  Take the
  lexicographic ordering of pairs $\langle i, j \rangle$, where
  $i, j \in \{2, \ldots, s_k\}$, and, provided $\langle i, j \rangle$
  is the $m$th pair in this ordering, put
  $$
  \begin{array}{lcl}
    A_m^{k+1}(x) & = & A_1^k(x) \imp B_1^k(x) \dis A_i^k(x) \dis B_j^k(x); \\
    B_m^{k+1}(x) & = & B_1^k(x) \imp A_1^k(x) \dis A_i^k(x) \dis B_j^k(x).
  \end{array}
  $$

\begin{lemma}
  \label{lem:frame-F}
  Let $\frak{M} = \langle W, R, D, I \rangle $ be an $a$-suitable
  model and let $w \in W$.  Then,
  $$
\begin{array}{lclclcl}
  \notsat{M}{w}{A_m^k}[a] & \Longleftrightarrow & w R a_m^k & \mbox{and} & \notsat{M}{w}{B_m^k}[a] & \Longleftrightarrow & w R b_m^k.
\end{array}
$$
\end{lemma}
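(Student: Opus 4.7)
The plan is to proceed by induction on the level $k$, after first establishing an analogous correspondence for the three top-most worlds. As a preliminary step, one checks directly from the $a$-suitability clauses and the Kripke semantics that $\notsat{M}{w}{D_i}[a]$ iff $wRd_i$, for each $i \in \{1,2,3\}$. For $i = 1$ this amounts to observing that $\exists x\, P(x)$ fails at $w$ exactly when the upper cone of $w$ avoids the only three worlds at which $P$ is non-empty, namely $d_2, d_3, b_1^0$, which are precisely the worlds outside the upper cone of $d_1$. The cases $D_2$ and $D_3$ are handled by a similar inspection of $w$'s upper cone against these three worlds, exploiting the specific values of $a$ and $b$ inside $I(d_2,P)$, $I(d_3,P)$, and $I(b_1^0,P)$.

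Base case $k=0,1$. By the intuitionistic clause for $\imp$ together with the preliminary step, the formula $A_1^0(x) = D_2(x) \imp D_1 \dis D_3(x)$ is refuted at $w$ under $x \mapsto a$ iff some $w' \geqslant w$ satisfies $w' \not R d_2$, $w' R d_1$, and $w' R d_3$; inspection of $\frak{F}$ shows that the unique world meeting all three conditions is $a_1^0$, so $\notsat{M}{w}{A_1^0}[a]$ iff $wRa_1^0$. Each of the remaining level-$0$ and level-$1$ formulas is treated in the same way: its antecedent and disjunctive consequent impose a combination of ``$w'$ is'' and ``$w'$ is not'' $R$-related to particular previously-treated worlds, and the shape of $\frak{F}$ guarantees that only one world realises that combination---the intended one.

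Inductive step. Fix $k \geqslant 1$, assume the lemma for all formulas of level $\leqslant k$, and consider $A_m^{k+1}(x) = A_1^k(x) \imp B_1^k(x) \dis A_i^k(x) \dis B_j^k(x)$. The clause for $\imp$ together with the induction hypothesis gives $\notsat{M}{w}{A_m^{k+1}}[a]$ iff there exists $w' \geqslant w$ with $w' \not R a_1^k$, $w' R b_1^k$, $w' R a_i^k$, and $w' R b_j^k$. By construction of $\frak{F}$, the unique such $w'$ is $a_m^{k+1}$: it was added at level $k+1$ precisely to encode the pair $(i,j)$ through this successor pattern, with $a_1^k$ omitted. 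Hence $\notsat{M}{w}{A_m^{k+1}}[a]$ iff $wRa_m^{k+1}$, and a symmetric argument covers $B_m^{k+1}$.

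The main obstacle is the uniqueness claim invoked at every step of the induction: within the whole frame $\frak{F}$---at level $k+1$ as well as at every earlier level---no world other than $a_m^{k+1}$ (respectively $b_m^{k+1}$) has the prescribed combination of $R$-related and non-$R$-related level-$k$ worlds in its upper cone. Making this rigorous requires a careful verification that the assignment $(i,j) \mapsto m$ built into the frame construction is injective at each level, and that upper cones of lower-level worlds have the wrong shape to satisfy the constraint. Once this uniqueness is in hand, the induction is routine.
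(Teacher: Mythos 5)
Your proposal follows the same route as the paper's proof, which records only ``induction on $k$'': you make that induction explicit, and the overall structure---first the correspondence $\notsat{M}{w}{D_i}[a]\Leftrightarrow wRd_i$ for the top worlds, then levels $0$ and $1$ by inspection, then the inductive step reducing refutability of $A^{k+1}_m$ at $w$ to the existence of some $w'\in R(w)$ seeing $b^k_1$, $a^k_i$, $b^k_j$ but not $a^k_1$, together with the uniqueness of such a $w'$---is exactly what is needed, and your identification of that uniqueness (injectivity of $(i,j)\mapsto m$ and the shape of upper cones across levels) as the crux is accurate. One local slip: your justification of the $D_1$ case misstates the intuitionistic semantics of $\exists$. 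The formula $\exists x\,P(x)$ is refuted at $w$ iff $I(w,P)=\varnothing$, i.e.\ iff $w\notin\{d_2,d_3,b^0_1\}$; it is \emph{not} required that the entire upper cone of $w$ avoid these worlds (for instance, $a^0_1$ refutes $\exists x\,P(x)$ while seeing $d_3$, where $P$ is non-empty). The equivalence $\notsat{M}{w}{D_1}\Leftrightarrow wRd_1$ that you need is nevertheless true, because $\{d_2,d_3,b^0_1\}$ happens to coincide with the set of worlds that do not see $d_1$; only the reason you give for it is wrong, and the criterion you state, applied literally, would place $D_1$ true at $a^0_1$ and break the base case. With that justification repaired, the rest of the argument goes through as you describe.
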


\begin{proof}
  Induction on $k$.
\end{proof}

Now, let $\vp$ be a positive formula containing monadic predicate
letters $P_1, \ldots, P_n$ (we may assume $n \geqslant 2$).  For each
$i \in \{1, \ldots, n\}$, define
$$
\alpha_i(x) = A_i^{n+1}(x) \dis B_i^{n+1}(x).
$$
Finally, let $\vp^\ast$ be the result of substituting, for every
$i \in \{1, \ldots, n\}$, of $\alpha_i(x)$ for $P_i(x)$ into $\vp$.
\begin{lemma}
  \label{lem:QInt-main-lemma}
  $\vp \in {\bf QInt}$ if, and only if, $\vp^\ast \in {\bf QInt}$.
\end{lemma}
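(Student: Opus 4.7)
The left-to-right direction is immediate, since $\vp^\ast$ arises from $\vp$ by a predicate substitution and $\textbf{QInt}$ is closed under substitution. For the right-to-left direction I argue by contraposition: assuming a Kripke model $\mmodel{M} = \langle W, R, D, I\rangle$ and a world $w_0$ with $\notsat{M}{w_0}{\vp}$, I will build a Kripke model $\mmodel{M}^\ast = \langle W^\ast, R^\ast, D^\ast, I^\ast\rangle$ over the single predicate letter $P$ with $\notsat{M^\ast}{w_0}{\vp^\ast}$.

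The construction of $\mmodel{M}^\ast$ follows the same pattern as in the proof of Lemma~\ref{lem:vp-ast} in the modal section. For each $w \in W$ and each $a \in D(w)$ I attach to $w$ a fresh $a$-suitable copy $\mmodel{F}_{w, a}$ of a sufficiently tall initial segment of the frame $\frak{F}$, with the common domain extended (if necessary) by one auxiliary element distinct from $a$, and with $I^\ast(w, P) = \varnothing$ for $w \in W$. Because distinct $w$'s get their own copies and the base accessibility $R$ is antisymmetric, no $w' \neq w$ can enter $\mmodel{F}_{w, a}$ from $W$, so the accessibility from $W$ into $\mmodel{F}_{w, a}$ is governed only by edges I add at $w$. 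These edges are chosen so that, by Lemma~\ref{lem:frame-F} applied inside $\mmodel{F}_{w,a}$, both $w R^\ast a^n_i$ and $w R^\ast b^n_i$ hold exactly when $\notsat{M}{w}{P_i[a]}$, and at most one of them holds otherwise. This yields the base equivalence $\notsat{M^\ast}{w}{\alpha_i[a]}$ iff $\notsat{M}{w}{P_i[a]}$, and the separation of the copies $\mmodel{F}_{w, a}$ is precisely what lets the accessibility pattern at each $w$ match the local truth of the $P_i[a]$'s without being forced upward by transitivity.

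The heart of the argument is then an induction on positive subformulas $\psi$ of $\vp$ showing
\[
\sat{M}{w}{\psi^g} \;\;\Longleftrightarrow\;\; \sat{M^\ast}{w}{\psi^{\ast\,g}}
\]
for every $w \in W$ and every assignment $g$ into $D(w)$, where $\psi^\ast$ is obtained by substituting $\alpha_i(x)$ for each $P_i(x)$ in $\psi$. The atomic case is exactly the base equivalence, and the cases for $\con$, $\dis$, and $\exists$ reduce immediately to the inductive hypothesis. The main obstacle will be the cases $\imp$ and $\forall$, whose semantics ranges over \emph{all} $R^\ast$-successors of $w$, including worlds lying inside the attached copies $\mmodel{F}_{w, a}$. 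The key step there is to show that any putative refuting successor $v$ for $\psi_1^\ast \imp \psi_2^\ast$ (respectively, for $\forall x\,\psi_1^\ast$) cannot lie strictly inside some $\mmodel{F}_{w, a}$: using the $a$-suitable interpretation of $P$ and the designed accessibility pattern, one argues that either the antecedent $\psi_1^\ast$ already fails at such a $v$ for local reasons, or the consequent $\psi_2^\ast$ is forced to hold there, so that every genuine refuting successor must in fact lie in $W$, where the inductive hypothesis applies. Specialising the resulting equivalence to $\psi = \vp$ and $w = w_0$ yields $\notsat{M^\ast}{w_0}{\vp^\ast}$, as required.
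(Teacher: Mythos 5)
Your proof takes essentially the same route as the paper's: the substitution direction is identical, and for the converse you build the same model $\frak{M}^\ast$ (attach an $a$-suitable copy of $\frak{F}$ for each pair $(w,a)$, connect $w$ to both $a^n_i$ and $b^n_i$ of that copy exactly when \notsat{M}{w}{P_i[a]}, interpret $P$ as empty on the original worlds) and run the same induction, with the same key step that a refuting successor for $\imp$ or $\forall$ cannot lie inside a copy. One justification you give is wrong, however: you claim that antisymmetry of $R$ keeps every $w'\neq w$ out of $\frak{F}_{w,a}$, so that access into a copy is controlled solely by the edges added at $w$. Intuitionistic accessibility must be transitive, so adding $w R^\ast a^n_i$ forces $w' R^\ast a^n_i$ for every $R$-predecessor $w'$ of $w$; antisymmetry plays no role here, and the copies are not separated from below. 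The construction survives this, but for a different reason: by the heredity condition ($wRw'$ implies $I(w,P_i)\subseteq I(w',P_i)$) in $\frak{M}$, whenever transitivity forces $w'$ to see the pair $a^n_i, b^n_i$ of a copy attached at some $w$ above it (with $a\in D(w')$), the atom $P_i[a]$ already fails at $w'$, so the inherited edges are exactly the ones the encoding needs anyway. You should replace the separation claim by this monotonicity argument. Two smaller points: when \sat{M}{w}{P_i[a]} you should add neither of the two edges rather than ``at most one'' (harmless, but pointless); and the reason the consequent holds inside the copies deserves to be explicit---each $\alpha_i(x)=A^n_i(x)\vee B^n_i(x)$ is true under every assignment at every world of a copy accessible from $W$ (Lemma~\ref{lem:frame-F} together with the structure of $\frak{F}$), and positivity of $\vp$ propagates this truth to $\psi_2^\ast$, which is what rules out refuting successors inside the copies.
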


\begin{proof}
  The right-to-left direction follows from the closure of { \bf QInt }
  under predicate substitution.  For the other direction, assume that
  \notsat{M_\vp}{w_0}{\vp} for some
  $\frak{M}_{\vp} = \langle W_{\vp}, R_{\vp}, D_{\vp}, I_{\vp}
  \rangle$
  and $w_0 \in W_{\vp}$. (We may assume without a loss of generality
  that the domain of $\frak{M}_{\vp}$ contains at least three
  individuals; we use this fact in the construction of $\frak{M}^\ast$
  below.) We need to construct a model $\frak{M}^\ast$ falsifying
  $\vp^\ast$ at some world.

    For every $w \in W_{\vp}$ and $a \in D_{\vp}(w)$, let
    $\frak{F}_a^{w} = \langle \{ \langle w, a \rangle\} \times
    W^{\phantom{0}}, R_a^{w} \rangle$ be an isomorphic copy of the
    frame $\frak{F}$ under the isomorphism
    $f\colon v \mapsto \langle w, a, v \rangle$.

  Let
  $$
  \begin{array}{lcl}
    W^\ast & = & \displaystyle
                 W_\vp \cup \big( \bigcup\limits_{w \in W_\vp} \big( \{w\}
                 \times  D_\vp(w) \big)  \times W \big).
  \end{array}
  $$
  Let $S$ be the smallest relation on $W^\ast$ such that
  \begin{itemize}
  \item $R_\vp \subseteq S$;
  \item
    $\bigcup\limits_{w \in W_\vp,\ a \in D_\vp (w)} R_a^{w} \subseteq
    S$;
  \item for every $w \in W_\vp$, every $v \in W^\ast - W_\vp$, every
    $a \in D_\vp (w)$ and every $i \in \{ 1, \ldots, n \}$,
    $$
    \begin{array}{lcll}
      w S v
        & \leftrightharpoons
        & \mbox{either}
        & \mbox{$v\in \{\langle w, a, a^{n+1}_i \rangle, \langle w, a, b^{n+1}_i \rangle \}$
          and $\frak{M}_\vp, w \not\models P_i [a]$}
        \\
        &
        & \mbox{or}
        & \mbox{$v\in\{\langle w, a, a^{n+1}_{n+1}\rangle,
          \langle w, a, b^{n+1}_{n+1}\rangle\}$,}
    \end{array}
    $$
  \end{itemize}
  and let $R^\ast$ be the reflexive transitive closure of $S$.

  Let $D^\ast (u) = D_\vp (u)$ if $u \in W_\vp$ and $D^\ast (u) =
  D_\vp (w)$ if $u = \langle w, a, v \rangle$, for some $w \in W_\vp$,
  $a \in D_\vp (w)$ and $v \in W$.

  Let $I^\ast$ be an interpretation function on
  $\langle W^\ast, R^\ast, D^\ast \rangle$ such that, for every $w \in
  W_\vp$ and every $a \in D_\vp (w)$,
  \begin{itemize}
  \item $I^\ast(\langle w, a, d_2 \rangle, P) = D_\vp (w) -
    \{a\}$;
  \item $I^\ast(\langle w, a, d'_2 \rangle, P) = \{a'\}$, where $a'$ is
    a fixed element of $D_\vp (w) - \{a\}$;
  \item $I^\ast(\langle w, a, d_3 \rangle, P) = \{a, a'\}$, where
    $a'$ is a fixed element of $D_\vp (w) - \{a\}$;
  \item $I^\ast(\langle w, a, b_1^0 \rangle, P) = \{a'\}$, where $a'$
    is a fixed element of $D_\vp (w) - \{a\}$;
  \item $I^\ast(u, P) = \varnothing$, for every
    $u \in W^\ast - \big( \{ d_2, d'_2, d_3, b^0_1 \} \times \{
    \langle w, c \rangle : w \in W_\vp, c \in D_\vp(w)\} \big)$.
  \end{itemize}
  Finally, let
  $\frak{M}^\ast = \langle W^\ast, R^\ast, D^\ast, I^\ast \rangle$.
  Evidently, $I^\ast$ satisfies the heredity condition; hence,
  $\frak{M}^\ast$ is an intuitionistic Kripke model.

  We can now show, by induction on $\psi$, that
  \sat{M_{\vp}}{w}{\psi[a_1, \ldots, a_m]}, if and only if,
  \sat{M^\ast}{w}{\psi^\ast[a_1, \ldots, a_m]}, for every
  $w \in W_{\vp}$, every $a_1, \ldots, a_m \in D^\ast(w)$, and every
  subformula $\psi$ of $\vp$.  We only consider the cases where $\psi$
  is atomic and where $\psi = \psi_1 \imp \psi_2$.

  We shall rely on the following sublemmas, whose proof we leave to
  the reader:

  \begin{sublemma}
    \label{lem:sublemma-2}
    For every $w \in W_\vp$ and every $a \in D_\vp (w)$,
    $$
    \begin{array}{lcl}
      \frak{M}^\ast, {w} \not\models {A_1^n}[a]
        & \mbox{and}
        & \frak{M}^\ast, {w} \not\models {B_1^n}[a].
    \end{array}
    $$
  \end{sublemma}

  \begin{sublemma}
    \label{lem:sublemma-3}
      Let $\frak{N}_a = \langle W, R, D, I \rangle $ be an
      $a$-suitable model with a constant domain $\mathcal{A}$ and let
      $b \in \mathcal{A} - \{a\}$.  Then, for every $w \in W$, every
      $k \geqslant 2$ and every $m \in \{ 1, \ldots, s_{k} \}$,
      $$
      \begin{array}{lcl}
        \frak{N}_a, {w} \models {A_m^k}[b]
        & \mbox{and}
        & \frak{N}_a, {w} \models {B_m^k}[b].
      \end{array}
      $$
    \end{sublemma}

    Observe that, for every $w \in W_\vp$ and every $a \in D_\vp(w)$,
    the submodel of $\frak{M}^\ast$ whose set of worlds is
    $\{ \langle w, a, v \rangle : v \in \{ \langle w, a \rangle \}
    \times W \}$ is a generated submodel of $\frak{M}^\ast$ and is
    $a$-suitable; hence, Lemma~\ref{lem:frame-F} and
    Sublemma~\ref{lem:sublemma-3} apply to such submodels of
    $\frak{M}^\ast$.

    We now proceed with induction.

  Let $\psi = P_i (x)$, and so
  $\psi^\ast = A_i^{n+1}(x) \dis B_i^{n+1}(x)$, for some $i \in \{1,
  \ldots, n\}$.

  Assume $\frak{M}_\vp, w \not\models P_i [a]$.  By definition of
  $\frak{M}^\ast$, both $w R^\ast \langle w, a, a^{n+1}_i \rangle$ and
  $w R^\ast \langle w, a, b^{n+1}_i \rangle$.  By
  Lemma~\ref{lem:frame-F}, both
  $\frak{M}^\ast, \langle w, a, a^{n+1}_i \rangle \not\models
  A_i^{n+1}[a]$ and
  $\frak{M}^\ast, \langle w, a, b^{n+1}_i \rangle \not\models
  B_i^{n+1}[a]$.  Hence, by heredity,
  $\frak{M}^\ast, w \not\models A_i^{n+1}[a]$ and
  $\frak{M}^\ast, w \not\models B_i^{n+1}[a]$.  Therefore,
  $\frak{M}^\ast, w \not\models A_i^{n+1}[a] \dis B_i^{n+1}[a]$.

  Conversely, assume
  $\frak{M}^\ast, w \not\models A_i^{n+1}[a] \dis B_i^{n+1}[a]$. Then,
  $\frak{M}^\ast, w \not\models A_i^{n+1}[a]$ and
  $\frak{M}^\ast, w \not\models B_i^{n+1}[a]$. Hence, there exist
  $u', u'' \in W^\ast$ and $i, j \in \{2, \ldots, s_n\}$ such that
  $u', u'' \in R^\ast (w)$ and
  $$
  \begin{array}{rrrr}
    u' \models A^n_1 [a];
    & u' \not\models B^n_1 [a];
    & u' \not\models A^n_i [a];
    & u' \not\models B^n_j [a]; \medskip \\
    u'' \models B^n_1 [a];
    & u'' \not\models A^n_1 [a];
    & u'' \not\models A^n_i [a];
    & u'' \not\models B^n_j [a].
  \end{array}
  \eqno{({\ast})}
  $$

  We show that $u' = \langle w, a, a^{n+1}_i \rangle$ and
  $u'' = \langle w, a, b^{n+1}_i \rangle$.

  Since $u' \models A^s_1 [a]$ and $u'' \models B^s_1 [a]$, by
  Sublemma~\ref{lem:sublemma-2}, $u', u'' \in W^\ast - W_\vp$.
  Therefore, from $u' \not\models B^n_1 (a)$ and
  $u'' \not\models A^n_1 [a]$ we obtain, by
  Sublemma~\ref{lem:sublemma-3}, that
  $u', u'' \in \{\langle w, a \rangle\} \times W$.  Hence, from
  (${\ast}$) we obtain by Lemma~\ref{lem:frame-F} that, for some
  $i, j \in \{ 2, \ldots, s_n \}$,
  $$
  \begin{array}{rrrr}
    \neg   u' R^\ast  \langle w, a, a^n_1 \rangle;
    & u' R^\ast \langle w, a, b^n_1 \rangle;
    & u' R^\ast \langle w, a, a^n_i \rangle;
    & u' R^\ast \langle w, a, b^n_j \rangle; \medskip \\
    \neg   u'' R^\ast  \langle w, a, b^n_1 \rangle;
    & u'' R^\ast \langle w, a, a^n_1 \rangle;
    & u'' R^\ast \langle w, a, a^n_i \rangle;
    & u'' R^\ast \langle w, a, b^n_j \rangle.
  \end{array}
  $$
  Now, in $\frak{F}$, and hence in $\frak{F}_w^a$, only worlds of
  level greater than $n$ see more than one world of level $n$.  Hence,
  $u'$ and $u''$ belong to a level greater than $n$.  Since every
  world of $\frak{F}_w^a$ of level greater that $n + 1$ sees $a^n_1$
  and $b^n_1$, by Lemma~\ref{lem:frame-F}, for every world $w$ of
  level greater than $n + 1$, both $w \not\models A^n_1[a]$ and
  $w \not\models B^n_1[a]$.  Therefore, $u'$ and $u''$ are worlds of
  level $n+1$.  Since $u' R^\ast u'$ and $u'' R^\ast u''$, in view of
  (${\ast}$), $\frak{M}^\ast, u' \not\models A_i^{n+1}[a]$ and
  $\frak{M}^\ast, u'' \not\models B_i^{n+1}[a]$.  Hence, by
  Lemma~\ref{lem:frame-F}, $u' R^\ast \langle w, a, a_i^{n+1} \rangle$
  and $u'' R'' \langle w, a, b_i^{n+1} \rangle$.  Therefore,
  $u' = \langle w, a, a^{n+1}_i \rangle$ and
  $u'' = \langle w, a, b^{n+1}_i \rangle$.

  Thus, $w R^\ast \langle a, a^{n+1}_i \rangle$ and
  $w R^\ast \langle w, a, b^{n+1}_i \rangle$. Hence,
  $\frak{M}^\ast, w \not\models P_i[a]$.

  Let $\psi = \psi_1 \imp \psi_2$.  Assume
  \notsat{M^\ast}{w}{\psi^\ast[a_1, \ldots, a_m]}.  Then,
  \sat{M^\ast}{u}{\psi_1^\ast[a_1, \ldots, a_m]} and
  \notsat{M^\ast}{u}{\psi_2^\ast[a_1, \ldots, a_m]}, for some
  $u \in W^\ast$ with $w R^\ast u$. If we could apply the inductive
  hypothesis to $u$, we would be done.  To see that we can, notice
  that $\psi_2^\ast$ is built out of formulas of the form
  $A_i^{n+1}(x) \dis B_i^{n+1}(x)$ using only $\con$, $\dis$, $\imp$,
  $\exists$, and $\forall$. As, in view of Lemma~\ref{lem:frame-F} and
  Sublemma~\ref{lem:sublemma-3}, formulas
  $A_i^{n+1}(x) \dis B_i^{n+1}(x)$ are true at every world in
  $W^\ast - W_{\vp}$ accessible from $W_{\vp}$, it follows that
  $u \in W_{\vp}$; the inductive hypothesis is, therefore, applicable.
  Thus, \notsat{M_{\vp}}{w}{\psi[a_1, \ldots, a_m]}.  The other
  direction is straightforward.

  We conclude that \notsat{M^\ast}{w_0}{\vp^\ast} and, thus,
  $\vp^\ast \notin {\bf QInt}$.
\end{proof}

As the construction of $\vp^\ast$ from $\vp$ did not introduce any
fresh individual variables, we have the following:

\begin{theorem}
  \label{thr:main}
  The positive fragment of ${\bf QInt}$ with two individual variables
  and a single predicate letter is undecidable.
\end{theorem}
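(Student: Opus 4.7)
The plan is to combine Theorem~\ref{thr:int-2-positive} with Lemma~\ref{lem:QInt-main-lemma} via the substitution $\cdot^{\ast}$ already set up in the excerpt. Theorem~\ref{thr:int-2-positive} supplies an undecidable recursive set $F$ of positive formulas of $\mathbf{QInt}$ using only two individual variables and some finite (but unbounded across $F$) number of monadic predicate letters $P_1, \ldots, P_n$. Given any $\vp \in F$, I would compute $\vp^{\ast}$ effectively by substituting the formula $\alpha_i(x) = A_i^n(x) \dis B_i^n(x)$ for $P_i(x)$, as specified just before Lemma~\ref{lem:QInt-main-lemma}. By that lemma, $\vp \in \mathbf{QInt}$ iff $\vp^{\ast} \in \mathbf{QInt}$, so the image $\{ \vp^{\ast} : \vp \in F \}$ is an undecidable fragment of $\mathbf{QInt}$; it remains only to check that each $\vp^{\ast}$ lives in the positive, two-variable, one-predicate sub-language.

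The key syntactic checks are the following three, carried out by a straightforward induction on the construction of $\alpha_i$ through $D_1, D_2(x), D_3(x), A_m^k(x), B_m^k(x)$. First, positivity: each of these building blocks is assembled out of $P$, $\con$, $\dis$, $\imp$, $\exists$, $\forall$ only, with no occurrence of $\bot$, so substituting $\alpha_i$ for $P_i$ inside a positive $\vp$ yields a positive $\vp^{\ast}$. Second, only the single monadic predicate letter $P$ occurs in any $\alpha_i$, hence in $\vp^{\ast}$. Third, and most delicate, only one individual variable appears free in each $\alpha_i(x)$, and each $\alpha_i(x)$ is itself built using at most one bound variable occurrence (in the closed sentence $D_1 = \exists x\, P(x)$); so, when substituting $\alpha_i(y)$ for an occurrence $P_i(y)$ of the second variable, one simply renames the bound $x$ of $D_1$ to $y$ (which is harmless since $D_1$ is closed), preserving the two-variable budget.

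The main obstacle I would expect is exactly this bookkeeping for bound-variable capture under $\cdot^{\ast}$: the formulas $A_m^k, B_m^k$ contain deeply nested occurrences of $D_2, D_3$, and hence of $\exists x$, so one has to be sure that substituting $\alpha_i(x)$ or $\alpha_i(y)$ for $P_i(x)$ or $P_i(y)$ inside a subformula of $\vp$ like $\forall x\, (\ldots \imp P_i(y) \ldots)$ does not accidentally capture a free variable or inflate the variable count. The closedness of $D_1$ and the fact that the only free variable in $\alpha_i(\cdot)$ is its argument make the capture analysis routine: a uniform $\alpha$-renaming between $x$ and $y$ according to which variable $P_i$ is applied to resolves it, and the total variable usage stays at two.

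Once these three properties are in place, the theorem follows: the map $\vp \mapsto \vp^{\ast}$ is a computable reduction from the undecidable set $F$ of Theorem~\ref{thr:int-2-positive} to the positive, two-variable, single-monadic-predicate fragment of $\mathbf{QInt}$, via the equivalence supplied by Lemma~\ref{lem:QInt-main-lemma}. Hence that fragment is undecidable, which is the content of Theorem~\ref{thr:main}.
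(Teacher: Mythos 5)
Your proposal is correct and follows essentially the same route as the paper: the paper likewise derives Theorem~\ref{thr:main} immediately from Theorem~\ref{thr:int-2-positive} and Lemma~\ref{lem:QInt-main-lemma}, observing only that the passage from $\vp$ to $\vp^{\ast}$ introduces no fresh individual variables, uses the single letter $P$, and preserves positivity. Your extra care about bound-variable renaming in the $\alpha_i$ (harmless since $D_1$ is closed and each $\alpha_i$ has one free variable) is exactly the bookkeeping the paper leaves implicit.
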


We now extend the argument presented above to the logics in the
intervals $[{\bf QBL}, {\bf QKC}]$ and $[{\bf QBL}, {\bf QFL}]$.

First, to establish the undecidability of the two-variable fragments
of logics whose semantics might contain irreflexive worlds, we need to
slightly modify formulas (1) through (6) listed above.  Therefore, we
define $\psi^{\ast}_T$ to be the conjunction of $\psi^+_T$ and
following formula:
\begin{equation}
  \label{eq:visser}
  \tag{5a}
  \forall x\, \forall y\, ( H(x, y) \dis (H(x, y) \imp q)),
\end{equation}
and define
$$
\vp^{\ast}_T = \psi^{\ast}_T \imp [ (\exists x\, (D(x) \imp \Box^5 q ) \imp p ) \imp
\Box p].
$$
This enables us to prove, using the tiling problem described above,
that $T$ tiles $\nat \times \nat$ if and only if
$\vp^{\ast}_T \not\in L(2)$, where
$L \in \{{\bf QBL}, {\bf QFL}^{sem}\}$.  We leave the details of the
proof to the reader.  As the construction of $\vp^{\ast}_T$ is uniform
for both logics, it follows that the claim holds for every
$L \in [{\bf QBL}, {\bf QFL}^{sem}]$.  Notice that the same proof also
works for logics in $[{\bf QBL}, {\bf QKC}]$.  We simulate binary
predicate letters by monadic ones as for {\bf QInt}. We now show how
to simulate all monadic predicate letters with a single one.

For the interval $[{\bf QBL}, {\bf QKC}]$, notice that if we add to
the model $\frak{M}^\ast$ built in the proof of
Lemma~\ref{lem:QInt-main-lemma} a world $d$ accessible from every
element of $W^\ast$ and such that $I^\ast(d, P) = D(d)$, the resultant
model is a model of every logic in the interval
$[{\bf QBL}, {\bf QKC}]$.  Thus, we have the following:

\begin{theorem}
  \label{thr:qkc}
  Let $L$ be a logic in the interval $[{\bf QBL}, {\bf QKC}]$. Then,
  the positive fragment of $L$ with two individual variables and a
  single predicate letter is undecidable.
\end{theorem}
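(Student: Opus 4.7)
The plan is to parallel the proof of Theorem \ref{thr:main}, with two adjustments. First, as the starting undecidable fragment I take the positive monadic two-variable fragment of each $L \in [{\bf QBL}, {\bf QKC}]$; this is undecidable in view of the preceding remarks about the formulas $\vp^{\ast}_T$, from which binary predicate letters can be eliminated as in Lemma \ref{lm:modelling-binary-predicates} and the propositional variables $p, q$ can be replaced by formulas such as $\exists x\, Q(x)$, all of which preserves positivity. Second, the reduction from many monadic letters to a single one must produce a countermodel based on a {\bf QKC}-frame, since any such frame is a frame for every $L$ in the interval.

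Given a positive formula $\vp$ in two variables with monadic letters $P_1, \ldots, P_n$, I define $\vp^{\ast}$ exactly as in the proof of Lemma \ref{lem:QInt-main-lemma} by substituting $\alpha_i(x) = A_i^n(x) \dis B_i^n(x)$ for $P_i(x)$. The direction $\vp \in L \Rightarrow \vp^{\ast} \in L$ is immediate from closure under predicate substitution. For the converse, I assume $\vp$ fails at some world $w_0$ of a countermodel $\frak{M}_{\vp}$ based on a {\bf QKC}-frame, build $\frak{M}^{\ast}$ exactly as in the proof of Lemma \ref{lem:QInt-main-lemma}, and then extend it by a single fresh world $d$ with $w R^{\ast} d$ for every $w \in W^{\ast}$ and $d R^{\ast} d$; I let $D^{\ast}(d)$ extend the domains of the other worlds and set $I^{\ast}(d, P) = D^{\ast}(d)$, so that $P$ holds universally at $d$.

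Two things then require checking. First, the extended frame is reflexive, anti-symmetric, transitive, and convergent (any two worlds share $d$ as a common successor), so it is a {\bf QKC}-frame, hence a frame for every $L$ in the interval. Second, the addition of $d$ must not disturb the refutation of $\vp^{\ast}$ at $w_0$. The key lemma, proved by a straightforward induction on positive formulas $\chi$ in the single predicate letter $P$, is that $\chi$ is true at $d$ under every assignment: the base case $\chi = P(x)$ holds by construction, and for $\chi = \chi_1 \imp \chi_2$ or $\chi = \forall x\, \chi_1$ the only successor of $d$ is $d$ itself, so the inductive hypothesis on the consequent (respectively, body) closes the step. A parallel induction on subformulas of $\vp^{\ast}$ then yields that the truth value at every $w \in W^{\ast}$ is unchanged by the addition of $d$: at the only new successor $d$ every positive consequent or body is already true, so no implication or universal quantifier at $w$ can be broken.

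I expect the second induction to be the main obstacle, since one has to confirm that the delicate correspondence of Lemma \ref{lem:frame-F} between the formulas $A_i^n, B_i^n$ and the appended structures $\frak{M}^w_a$ is preserved under the addition of the universal successor $d$; as just sketched, this reduces to the monotonicity of positive formulas in $P$ under extending the frame by a ``top'' point where $P$ is universally true. Combining the reduction with the starting undecidability, and observing that $\vp \mapsto \vp^{\ast}$ introduces no fresh individual variables, completes the plan.
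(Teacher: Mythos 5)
Your proposal is correct and follows essentially the same route as the paper: start from the undecidability of the positive monadic two-variable fragment via the formulas $\vp^{\ast}_T$, reuse the substitution $P_i(x) \mapsto \alpha_i(x)$ and the model $\frak{M}^\ast$ from Lemma~\ref{lem:QInt-main-lemma}, and then adjoin a single reflexive top world $d$, accessible from everything, at which $P$ is universally true, so that the frame becomes convergent (hence a frame for every $L$ in $[{\bf QBL}, {\bf QKC}]$) while positivity guarantees that no truth value in $W^\ast$ is disturbed. The paper states this last step in one sentence; your two inductions (truth of all positive $P$-formulas at $d$, then preservation at the old worlds) are exactly the verification it leaves implicit.
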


We next consider the interval $[{\bf QBL}, {\bf QFL}^{sem}]$.  In this
case, we need to make a more substantial modification to the frame
$\frak{F}$, as the semantics of ${\bf QFL}^{sem}$ prohibits the
existence of reflexive worlds.  We then proceed as follows. First, add
to $W$ worlds $\bar{d}_2$, $\bar{d}'_2$, and $\bar{d}_3$ with
$d_2 R \bar{d}_2$, $d'_2 R \bar{d}'_2$, and $d_3 R \bar{d}_3$.
Second, for every $k \geqslant 0$, do the following: for every world
$a^k_i$, add to $W$ the world $\bar{a}^k_i$ and, for every world
$b^k_i$, add to $W$ the world $\bar{b}^k_i$; also, let
$a^k_i R \bar{a}^k_i$ and $b^k_i R \bar{b}^k_i$, for every $k$ and
$i$.  Lastly, whenever in $\frak{F}$ we had $a^{k+1}_i R a_j^{k}$ or
$a^{k+1}_i R b_j^{k}$, let $\bar{a}^{k+1}_i R a_j^{k}$ and
$\bar{a}^{k+1}_i R b_j^{k}$; also, whenever we had
$b^{k+1}_i R a_j^{k}$ or $b^{k+1}_i R b_j^{k}$, let
$\bar{b}^{k+1}_i R a_j^{k}$ and $\bar{b}^{k+1}_i R b_j^{k}$.  We then
define $a$-suitable models so that $I(\bar{d}_2, P) = I(d_2, P)$,
$I(\bar{d}'_2, P) = I(d'_2, P)$, $I(\bar{d}_3, P) = I(d_3, P)$, and
for every $k$ and $i$, $I(\bar{a}_k^i, P) = I(a_k^i, P)$ and
$I(\bar{b}_k^i, P) = I(b_k^i, P)$. In essence, we created ``doubles''
for the worlds $d_2$, $d'_2$, $d_3$, $a_k^i$, and $b_k^i$, which serve
to evaluate formulas whose main connective is $\imp$ or $\forall$ at
the worlds whose doubles they are. Then, $a$-suitable models satisfy
the condition in the statement of Lemma~\ref{lem:frame-F}, and the
model $\frak{M}^\ast$ built in the proof of
Lemma~\ref{lem:QInt-main-lemma} becomes a model of every logic in
$[{\bf QBL}, {\bf QFL}^{sem}]$.  As
${\bf QFL} \subseteq {\bf QFL}^{sem}$, we have the following:

\begin{theorem}
  \label{thr:main}
  Let $L$ be a logic in the interval $[{\bf QBL}, {\bf QFL}]$.  Then,
  the positive fragment of $L$ with two individual variables and a
  single predicate letter is undecidable.
\end{theorem}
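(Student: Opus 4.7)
The plan is to combine the three reductions laid out in the preceding paragraphs into a single chain. First, I would reduce the tiling problem for $T$ to $L(2)$-refutability of the positive, two-variable formula $\vp^{\ast}_T$, which uses two binary predicates $H, V$ and finitely many monadic ones $P_t, D$; this step works uniformly across $[\mathbf{QBL}, \mathbf{QFL}^{sem}]$, with the $\Box$-prefixes in $\vp^{\ast}_T$ compensating for the possible irreflexivity of the accessibility relation. Second, I would simulate $H$ and $V$ by fresh monadic letters using an adaptation of Lemma~\ref{lm:modelling-binary-predicates}: the witnessing worlds $w_{a,b}$ introduced there must be made terminal, so that anti-symmetry, transitivity, and the well-foundedness of the converse of $R$ are preserved. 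Third, I would compress the remaining monadic letters to a single one via the substitution $P_i(x) \mapsto A_i^n(x) \dis B_i^n(x)$ as in Lemma~\ref{lem:QInt-main-lemma}, but now attaching copies of the modified frame $\frak{F}$ with doubles $\bar{d}_2, \bar{d}_3, \bar{a}^k_i, \bar{b}^k_i$ to each world of the original countermodel.

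The heart of the argument is a doubled analogue of Lemma~\ref{lem:frame-F}. Since by construction $I(\bar{w}, P) = I(w, P)$, and each double $\bar{w}$ inherits all outgoing edges of $w$ except the loop, I would prove by induction on $k$ that $\notsat{M}{w}{A_i^k[a]}$ iff $w R a_i^k$, and similarly for $B_i^k$. The overall correctness --- that $\notsat{M_\vp}{w_0}{\vp}$ iff $\notsat{M^\ast}{w_0}{\vp^\ast}$ --- then follows by induction on subformulas of $\vp$ exactly as in Lemma~\ref{lem:QInt-main-lemma}, the key invariant being that the disjunctions $A_i^n(x) \dis B_i^n(x)$ hold universally throughout $\frak{M}^\ast \setminus W_\vp$ at worlds reachable from $W_\vp$, so the implication case of the induction cannot escape $W_\vp$ and the hypothesis is always applicable. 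Since the doubled frame is anti-symmetric and transitive with well-founded converse, $\frak{M}^\ast$ is a valid model for every logic in $[\mathbf{QBL}, \mathbf{QFL}^{sem}]$; combined with ${\bf QFL} \subseteq {\bf QFL}^{sem}$, this transfers the undecidability to every $L$ in $[\mathbf{QBL}, \mathbf{QFL}]$.

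The hard part will be the doubled analogue of Lemma~\ref{lem:frame-F}. Without reflexivity, formulas $C \imp D$ or $\forall x\, C$ at a world $w$ are no longer checked at $w$ itself, so the edges $w R \bar{w}$ must do the work that reflexivity performed in the intuitionistic case. I would have to track carefully how each $A_i^k(x), B_i^k(x)$ is evaluated at both a world and at its double, and confirm that no double is reached prematurely when decomposing the nested implications defining these formulas; otherwise the precise correspondence between formulas and the intended worlds of $\frak{F}$ could silently break, collapsing the whole encoding.
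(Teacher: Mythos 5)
Your proposal follows the paper's own proof essentially step for step: the same formula $\vp^{\ast}_T$ with the $\Box$-prefixes compensating for irreflexivity, the same adaptation of Lemma~\ref{lm:modelling-binary-predicates} to eliminate $H$ and $V$, and the same doubling of the frame $\frak{F}$ (doubles $\bar{d}_2, \bar{d}_3, \bar{a}^k_i, \bar{b}^k_i$ carrying the outgoing edges of their originals) feeding into the analogues of Lemmas~\ref{lem:frame-F} and~\ref{lem:QInt-main-lemma}, finished off via ${\bf QFL} \subseteq {\bf QFL}^{sem}$. You also correctly isolate the verification of the doubled analogue of Lemma~\ref{lem:frame-F} as the delicate point, which is precisely the detail the paper itself leaves to the reader.
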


\begin{remark}
  Note that the results of this section hold true if we only consider
  frames with constant domains.
\end{remark}

\section{Discussion}
\label{sec:conclusion}


As already noticed, the results presented in the present paper
concerning sublogics of {\bf QGL} and {\bf QGrz} are quite tight: as
shown in~\cite{WZ01}, for all ``natural'' sublogics of {\bf QGL} and
{\bf QGrz}---including {\bf QK}, {\bf QT}, {\bf QD}, {\bf QK4}, {\bf
  QS4}, {\bf QGL}, and {\bf QGrz}---adding to the restriction to two
individual variables and a single monadic predicate letter considered
in section~\ref{sec:modal-logics} a minor restriction that the modal
operators only apply to formulas with at most one free variable,
results in decidable fragments of those logics.  It is not difficult
to notice that the results analogous to those obtained in
section~\ref{sec:modal-logics} can be obtained for quasi-normal logics
such as {\bf QS} (Solovay's logic) and Lewis's {\bf QS1}, {\bf QS2},
and {\bf QS3}~\cite{Feys65}.

A notable exception in our consideration of modal logics is {\bf QS5},
whose two-variable monadic fragment was shown to be undecidable
in~\cite{KKZ05}.  While it is not difficult to extend our results to
the multimodal version of {\bf QS5}---we need to modify the
construction used for sublogics {\bf QKTB} by substituting a
succession of two steps along distinct accessibility relations for a
single step along a single aceessibility relation in the frames of
$a$-suitable models---nor is it difficult to show, by encoding the
tiling problem used in~\cite{KKZ05}, that the two-variable fragment of
{\bf QS5} with two monadic predicate letters and infinitely many
propositional symbols is undecidable, the case of {\bf QS5} remains
elusive.  We conjecture that the fragment of {\bf QS5} with two
variables and a single monadic predicate letter is decidable.

On the other hand, it is relatively straightforward to show that the
two-variable fragment of {\bf QS5} with a single monadic predicate
letter and an infinite supply of individual variables is undecidable.
Indeed, let {\bf SIB} be the first-order theory of a symmetric
irreflexive binary relation $S$; it is well-known that {\bf SIB} is
undecidable~\cite{NerodeShore80, Kremer97}. We can then simulate
$S(x,y)$ as $\Box(\neg P(x)\vee\neg P(y))$ and show that, if a
quantified modal logic $L$ is valid on a frame containing a world
that can see infinitely many worlds, then $L$ is undecidable in
the language with a single monadic predicate letter (and infinitely
many individual variables). This observation covers all modal logics
considered in~\cite{KKZ05}, but not covered by the results of
section~\ref{sec:modal-logics}, including {\bf QS5}, {\bf QGL.3}, and
{\bf QGrz.3}.


By contrast, we can say nothing about superintuitionistic logics not
included in the interval $[{\bf QInt}, {\bf QKC}]$, as our proof
relies on the fact that we are working with the positive fragment of
those logics.  It is not essential to our proof that formulas
$A^k_i(x)$ and $B^k_i(x)$ be positive; however, by discarding their
positivity we would weaken, rather than strengthen, our results.

\section*{Acknowledgements}
\label{acknowledgements}

We are grateful to the anonymous referees for comments that helped to
significantly improve the presentation of the paper.

\end{document}